\newtheorem{theorem}{Theorem}[section]
\newtheorem{corollary}[theorem]{Corollary}
\newtheorem{lemma}[theorem]{Lemma}
\newtheorem{proposition}[theorem]{Proposition}
\newtheorem{observation}[theorem]{Observation}
{\theorembodyfont{\rmfamily} \newtheorem{example}		[theorem]
{Example}}
\def\squarebox#1{\hbox to #1{\hfill\vbox to #1{\vfill}}}
\newcommand{\qed}{\hspace*{\fill}
\vbox{\hrule\hbox{\vrule\squarebox{.667em}\vrule}\hrule}\smallskip}
\newenvironment{proof}{\noindent{\bf Proof:~~}}{\(\qed\)}
\newcommand{\eat}[1]{}
\newcommand{\sm}{\setminus}
\newcommand{\one}{\downarrow}
\newcommand{\mone}{\uparrow}
\newcommand{\slack}{S}
\newcommand{\na}{Z}
\newcommand{\compose}{\circ}
\newcommand{\rg}[0]{G }
\newenvironment{prevproof}[2]{\noindent {\em {Proof of
{#1}~\ref{#2}:}}}{$\qed$\vskip \belowdisplayskip}
\title{Universally Utility-Maximizing Privacy Mechanisms}
\author{Arpita Ghosh\thanks{Yahoo!~Research, 2821 Mission College Boulevard, Santa Clara, CA.
Email: {\tt arpita@yahoo-inc.com}.}
\and
Tim Roughgarden\thanks{Department of Computer Science, 
Stanford University, 462 Gates Building, 353 Serra Mall, Stanford, CA 94305.
Supported in part by NSF CAREER Award CCF-0448664, 
an ONR Young Investigator Award, an AFOSR MURI grant,
and an Alfred P. Sloan Fellowship.
Email: {\tt tim@cs.stanford.edu}.}
\and
Mukund Sundararajan\thanks{Department of Computer Science,
Stanford University, 470 Gates Building, 353 Serra Mall, Stanford, CA 94305.
Supported by NSF Award CCF-0448664, a Stanford Graduate Fellowship and
an internship at Yahoo! Research.
Email: {\tt mukunds@cs.stanford.edu}. }
}
\begin{document}

\maketitle
\thispagestyle{empty}

\begin{abstract}
A mechanism for releasing information about a statistical database
with sensitive data must resolve a trade-off between utility and
privacy.  Publishing fully accurate information maximizes utility
while minimizing privacy, while publishing random noise accomplishes
the opposite.  Privacy can be rigorously quantified using the
framework of {\em differential privacy}, which requires that a
mechanism's output distribution is nearly the same
whether or not a given database row is included or excluded.  
The goal of this paper is strong and general utility guarantees,
subject to differential privacy.

We pursue mechanisms that guarantee near-optimal utility to every
potential user, independent of its side information (modeled as a
prior distribution over query results) and preferences (modeled via a
loss function).
Our main result is: for each fixed count query and differential
privacy level, there is a {\em geometric mechanism} $M^*$ --- a discrete
variant of the simple and well-studied Laplace mechanism --- that is {\em
simultaneously expected loss-minimizing} for every possible user,
subject to the differential privacy constraint. 
This is an extremely strong
utility guarantee: {\em every} potential user $u$, no matter what its
side information and preferences, derives as much utility from $M^*$
as from interacting with a differentially private mechanism $M_u$ that
is optimally tailored to~$u$.
More precisely, for every user~$u$ there is an optimal mechanism~$M_u$
for it that factors into a user-independent part (the geometric
mechanism~$M^*$) followed by user-specific post-processing that can be
delegated to the user itself.

The first part of our proof of this result characterizes the
optimal differentially private mechanism for a fixed but arbitrary
user in terms of a certain basic feasible solution to a linear program
with constraints that encode differential privacy.  The second part shows 
that all of the relevant vertices of this polytope (ranging over all
possible users) are derivable from the geometric mechanism via
suitable remappings of its range.
\end{abstract}

\section{Introduction}

Organizations including the census bureau, medical
establishments, and Internet companies collect and publish statistical
information~\cite{Census, AOL}. The
census bureau may, for instance, publish the result of a query such as:
``How many individuals have incomes that exceed \$100,000?''. 
An implicit hope in this approach is that aggregate information is
sufficiently anonymous so as not to breach the privacy of any individual. 
Unfortunately, publication schemes initially thought to be ``private''
have succumbed to privacy attacks~\cite{AOL, Netflix,Socialnetwork},
highlighting the urgent need for mechanisms that are \emph{provably}
private. The differential privacy literature~\cite{BigBang, Survey,
MD, Smooth,Learning, Impossibility,  Datamining, SULQ} has proposed a
rigorous and quantifiable definition of privacy, as well as
provably privacy-preserving mechanisms for diverse applications
including statistical queries, machine learning, and
pricing. Informally, for $\alpha \in [0,1]$,
a randomized mechanism is $\alpha$-differentially
private if changing a row of the underlying database---the data of a
single individual---changes the probability of every mechanism output
by at most an $\alpha$ factor. Larger values of $\alpha$ correspond to
greater levels of privacy. Differential privacy is typically achieved
by adding noise that scales with $\alpha$.  While
it is trivially possible to achieve any level of differential privacy,
for instance by always returning random noise, this completely defeats
the original purpose of providing useful information. On the
other hand, returning fully accurate results can lead to privacy
disclosures~\cite{Survey}. \emph{The goal of this paper is to identify,
for each $\alpha \in [0,1]$, the optimal (i.e., utility-maximizing)
$\alpha$-differentially private mechanism.}

\section{Model}

\subsection{Differential Privacy}
\label{sec:dp}

We consider databases with $n$ rows drawn
from a finite domain $D$. Every row corresponds to an individual. Two
databases are \emph{neighbors} if they coincide in $n-1$ rows. A
\emph{count query}~$f$ takes a database~$d \in D^n$ as input and
returns the result $f(d) \in N= \{ 0, \ldots, n\}$ that is the number
of rows that satisfy a fixed, non-trivial predicate on the domain
$D$. Such queries are also called predicate or subset-sum queries;
they have been extensively studied in their own
right~\cite{Impossibility, SULQ, Datamining,Learning}, and form a
basic primitive from which more complex queries can be
constructed~\cite{SULQ}.

A randomized mechanism with a (countable) range $R$ is a function $x$
from $D^n$ to $R$, where $x_{dr}$ is the probability of outputting
the response $r$ when the underlying database is $d$. For $\alpha \in
[0,1]$, a  mechanism $x$ is \emph{$\alpha$-differentially private} if
the ratio $x_{{d_1}r}/x_{{d_2}r}$ lies in the interval
$[\alpha,1/\alpha]$ for every possible output $r \in R$ and pair $d_1,
d_2$ of neighboring databases\footnote{ The usual definition of
differential privacy requires this bound on the ratio for all
subsets of the range. With a countable range, the two definitions
are equivalent.}. (We 
interpret $0/0$ as $1$.)  Intuitively, the probability of every response
of the privacy mechanism --- and hence the probability of a successful
privacy attack following an interaction with the mechanism --- is, up
to a controllable $\alpha$ factor, independent of whether a given user
``opts in'' or ``opts out'' of the database.

A mechanism is \emph{oblivious} if, for all $r \in R$, $x_{d_1r} =
x_{d_2r}$ whenever $f(d_1) = f(d_2)$ --- if the output distribution depends
only on the query result. Most of this paper considers only oblivious
mechanisms; for optimal privacy mechanism design, this is without loss
of generality in a precise sense (see Section~\ref{sec:oblivious}). 
The notation and definitions above simplify for 
oblivious mechanisms and count queries. We can specify an oblivious
mechanism via the probabilities $x_{ir}$ of outputting a response $r
\in R$ for each query result $i \in N$; $\alpha$-differential
privacy is then equivalent to the constraint that the ratios
$x_{ir}/x_{(i+1)r}$ lie in the interval $[\alpha,1/\alpha]$ for every
possible output $r \in R$ and query result $i \in N \sm \{n\}$.

\begin{example}[Geometric Mechanism] 
The \emph{$\alpha$- geometric mechanism} is defined as follows. When
the true query result is $f(d)$, the mechanism outputs $f(d) + Z$. $Z$
is a random variable distributed as a two-sided geometric
distribution: 
$Pr[Z=z]=\frac{1-\alpha}{1+\alpha} \alpha^{|z|}$ for every
integer~$z$. This (oblivious) mechanism is $\alpha$-differentially
private because the probabilities of adjacent points in its range
differ by an $\alpha$ factor and because the true answer to a count
query differs by at most one on neighboring databases.
\end{example}

\subsection{Utility Model} \label{sec:model}

This paper pursues strong and general utility guarantees. Just as
differential privacy guarantees protection against every potential
attacker, independent of its side information, we seek mechanisms that
guarantee near-optimal utility to \emph{every} potential user, independent of
its side information and preferences. 

We now formally define preferences and side information. We model the
preferences of a user via a \emph{loss function} $l$; $l(i,r)$ denotes
the user's loss when the query result is $i$ and the mechanism's
(perturbed) output is $r$. We allow $l$ to be arbitrary, subject only
to being nonnegative, and nondecreasing in $|i-r|$ for each
fixed~$i$. For example, the loss function $|i-r|$ measures mean error,
the implicit measure  of (dis)utility in most previous literature on
differential privacy. Two among the many other natural possibilities are
$(i-r)^2$, which essentially measures variance of the error; and the binary loss
function $l_{bin}(i,r)$, defined as $0$ if $i = r$ and $1$ otherwise.

We model the side information of a user as a
prior probability distribution $\{p_i\}$ over the query results $i \in N$. This prior
represents the beliefs of the user, which might stem from 
other information sources, previous interactions with the mechanism, 
introspection, or common sense. We emphasize that we are {\em not} introducing priors to weaken
the definition of differential privacy; we use the standard definition
of differential privacy (which makes no assumptions about the side
information of an attacker) and use a prior only to discuss the {\em
  utility} of a (differentially private) mechanism to a potential user. 
  
Now consider a user with a prior  $\{p_i\}$ and loss function $l$ and an oblivious mechanism $x$ with range $R$. For a given input $d$ with query result $i =f(d)$, the user's expected loss is $\sum_{r \in R} x_{ir} \cdot l(i,r)$, where the expectation is over the coin flips internal to the mechanism. The user's prior then yields a measure of the mechanism's overall (dis)utility to the the user: 

\begin{equation} \label{eq:utility}
\sum_{i \in N} p_i \sum_{r \in R} x_{ir} \cdot l(i,r).
\end{equation}

This is simply the expected loss over the coin tosses of the mechanism
and the prior.\footnote{The central theorem of choice theory
  (e.g.~~\cite[Chapter 6]{BigFat}) states that every preference relation
over mechanisms that satisfies  reasonable axioms (encoding
``rationality'') can be modeled via \emph{expected} utility, just as
we propose. In particular, this theorem justifies the use of priors
for expressing a rational user's trade-off over possible query
results.} We can then define the \emph{optimal}
$\alpha$-differentially private mechanism for a user as one that
minimizes the user-specific objective function (\ref{eq:utility}).

\subsection{User Post-Processing} \label{sec:remap}

Could a single mechanism be good simultaneously for all users? A
crucial observation for an affirmative answer is that a user has the
power to post-process the output of a privacy mechanism, and that such
post-processing can decrease the user's expected loss.

\begin{example}[Post-Processing Decreases Loss]
Fix a database size $n$ that is odd. Consider a user with the binary
loss function $l_{bin}$ and prior $p_{0} = p_{n}=1/2$, $p_{j} =0$ for
all $j \in N \setminus \{0,n\}$, that interacts with the
$\alpha$-geometric mechanism. Without post-processing, i.e., when the
user accepts the mechanism's outputs at face value, the user's expected
loss is $(2 \cdot \alpha)/(1+\alpha)$. If the user maps outputs of the
geometric mechanism that are at least $(n+1)/2$ to $n$ and all other
outputs to $0$, it effectively induces a new mechanism
with the much smaller expected loss of $\alpha^{(n+1)/2}/(1+\alpha)$.
\end{example}

In general, a (randomized) \emph{remap} of a mechanism $x$ with range $R$ is a
probabilistic function $y$, with $y_{r'r}$ denoting the probability
that the user reinterprets the mechanism's response $r'\in R$ as the
response $r \in R$. A mechanism $x$ and a remap $y$ together induce a
new ($\alpha$-differentially private) mechanism $y \compose x$ with
$(y \compose x)_{ir} = \sum_{r' \in R} x_{ir'} \cdot y_{r'r}$.

We assume that a (rational) user with prior $p$ and loss function
$l$, interacting with a publicly known mechanism $x$, employs a remap
$y$ that induces the mechanism $y \compose x$ that minimizes its
expected loss~\eqref{eq:utility} over all such remaps. 
It is well known (e.g.~\cite[Chapter 9]{BigFat}) and
easy to see that, among all possible (randomized) remappings, the
optimal one follows from applying Bayes rule and then minimizing
expected loss. Precisely, for each response $r$ of $x$, compute the
posterior probability over query results: For every $i \in R$,  $p(i|r) =
(x_{ir} \cdot p_i)  / (\sum_{i' \in R} x_{i'r} \cdot p_{i'})$. Then,
choose the query result $i^* \in R$ that minimizes expected loss
subject to the posterior and set $y_{ri^*} =1$ and $y_{ri} =0$ for $i
\neq i^*$. 
This remap is simple, deterministic, and can be computed efficiently.

\section{Main Result and Discussion}

\begin{figure*}
\hskip 1 in 
\epsfig{angle=-90, file=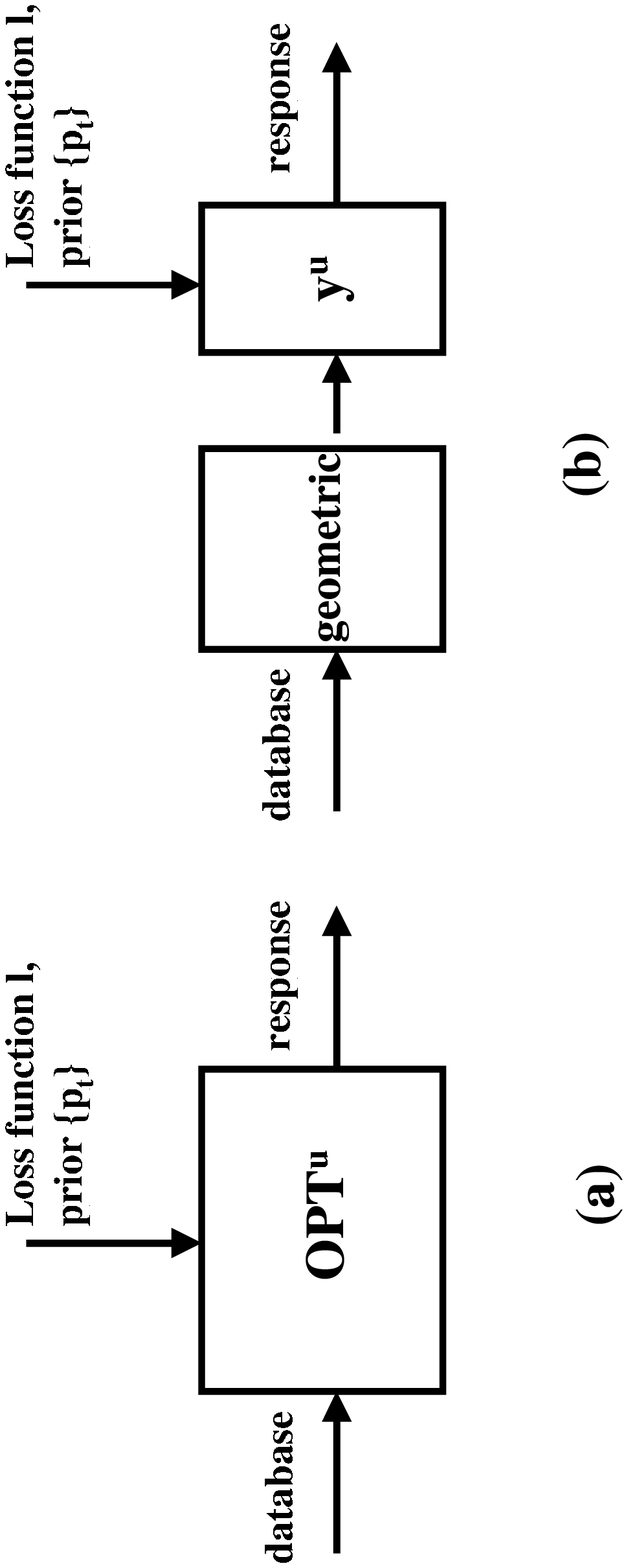, width= 5in}
\label{fig:factor}
\caption{Theorem~\ref{thm:main}. For every rational user $u$, (a) can
be factored into a user-independent part (the $\alpha$-geometric
mechanism) followed by a user-dependent post-processing step (the
optimal remap $y^u$).}
\end{figure*}

Our main result is that for every $\alpha \in [0,1]$, 
{\em the
$\alpha$-geometric mechanism is simultaneously optimal for every
rational user.}

\begin{theorem}[Main Result] \label{thm:main}
Let $x^G$ denote the $\alpha$- geometric mechanism for some database
size $n\geq 1$ and privacy level $\alpha \in [0,1]$, and let $y^u$
denote an optimal remap of $x^G$ for the user $u$ with prior $p$ and
(monotone) loss function $l$. Then $y^u \compose x^G$ minimizes $u$'s expected
loss~\eqref{eq:utility} over all oblivious, $\alpha$-differentially
private mechanisms with range $N$.
\end{theorem}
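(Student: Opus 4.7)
My plan is to frame the design of the best oblivious $\alpha$-differentially private mechanism for a fixed user as a linear program, and then to show that every optimal vertex of this LP can be produced by remapping the geometric mechanism $x^G$. Fixing a user $u$ with prior $p$ and monotone loss function $l$ and restricting to oblivious mechanisms with range $N$, the variables $x_{ir}$ for $i,r\in N$ obey the probability constraints $\sum_r x_{ir}=1$, non-negativity $x_{ir}\geq 0$, and the $\alpha$-differential privacy ratio constraints $\alpha\,x_{(i+1)r}\leq x_{ir}\leq \alpha^{-1}x_{(i+1)r}$ for $0\leq i<n$ and all $r$. The objective~\eqref{eq:utility} is linear in $x$, so some optimum $x^*$ is attained at a vertex of this polytope.

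Next I would describe the shape of a vertex by counting tight constraints column-by-column. The DP constraints force each column to be either identically zero or nowhere zero, and at a vertex each nonzero column has its ratio $x_{ir}/x_{(i+1)r}$ pinned to $\alpha$ or $\alpha^{-1}$ on all but boundedly many consecutive index pairs. Consequently, once normalized, a nonzero column has the form $c_r\,\alpha^{e_r(i)}$ where $e_r(\cdot)$ is a $\pm 1$ walk on $\mathbb{Z}$ --- the same unimodal shape produced by summing a consecutive block of columns of $x^G$, since for any interval $[a,b]\subseteq\mathbb{Z}$ the quantity $\sum_{r'=a}^{b}\alpha^{|i-r'|}$ is itself unimodal, with unit-slope exponent in $|i|$ outside $[a,b]$.

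Then I would turn this observation into an explicit construction: for each optimal vertex $x^*$, I would exhibit a partition $\{S_r\}_{r\in N}$ of $\mathbb{Z}$ into intervals such that the deterministic remap $y$ sending $r'\in S_r$ to $r$ satisfies $y\compose x^G=x^*$. The construction aligns each $S_r$ with the peak or plateau of column $r$ of $x^*$ and propagates outward, using $\sum_r x_{ir}=1$ and $\sum_{r'\in\mathbb{Z}}x^G_{ir'}=1$ to glue the per-column data into a globally consistent tiling of $\mathbb{Z}$. With such a $y$ in hand, the Bayes-optimal user remap $y^u$ of Section~\ref{sec:remap} satisfies
\[
\sum_i p_i\sum_r(y^u\compose x^G)_{ir}\,l(i,r)\;\leq\;\sum_i p_i\sum_r(y\compose x^G)_{ir}\,l(i,r)\;=\;\sum_i p_i\sum_r x^*_{ir}\,l(i,r),
\]
which equals the LP optimum. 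Since $y^u\compose x^G$ is itself $\alpha$-differentially private (post-processing preserves DP) with range $N$, its expected loss cannot go below the LP optimum, so equality holds and $y^u\compose x^G$ is optimal for $u$.

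The main obstacle is the structural step: showing that every relevant LP vertex --- not just the one realized by $x^G$ --- can be obtained as an interval-partition remap of $x^G$, so that a single mechanism serves all users simultaneously. This requires both a careful count of active constraints to pin down the $\alpha^{e_r(i)}$ vertex shape and a gluing argument that assembles per-column peak/plateau data into one global partition of $\mathbb{Z}$; once this correspondence is in place, the utility statement is an immediate consequence of the optimality of Bayesian user post-processing.
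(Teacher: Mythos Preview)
Your high-level strategy --- formulate a user-specific LP, pass to an optimal vertex, and realize that vertex as a deterministic remap of $x^G$ --- is exactly the paper's three-step plan. The gap is in your structural step. You assert that at a vertex ``each nonzero column has its ratio $x_{ir}/x_{(i+1)r}$ pinned to $\alpha$ or $\alpha^{-1}$ on all but boundedly many consecutive index pairs,'' and that the resulting $\pm 1$ exponent walk is unimodal. But the vertex property alone gives you only a \emph{count} of slack constraints (the paper's Lemma~5.6 shows the total number of slacks equals the number of zero columns); it does not force the slacks to sit in consecutive positions within a column, nor does it force unimodality. In fact the paper remarks explicitly (footnote in Section~5) that there \emph{are} vertices of this polytope that cannot be obtained by remapping $G$; they just happen not to be optimal for any user.

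What rules out those bad vertices is the monotonicity of the loss function together with optimality, and you never invoke it. The paper's key lemma (Lemma~5.3) is a \emph{row} argument: using monotonicity of $l$ and optimality of $x$, it shows each row of the constraint matrix has the pattern $\downarrow\cdots\downarrow\,S\,\uparrow\cdots\uparrow$ (some $\downarrow$'s, at most one slack, then $\uparrow$'s). From this row pattern, together with the feasibility constraint $\sum_r x_{ir}=1$ and the vertex count, the paper derives the column structure (Lemma~5.7) that matches what an interval remap of $G$ produces. Your proposal tries to read off the column structure directly from active-constraint counting, bypassing the row argument; that cannot succeed, because the monotone-loss hypothesis is essential and your sketch never uses it. To close the gap you need an analogue of Lemma~5.3 --- some use of optimality against a monotone loss that forces the $\downarrow/\uparrow$ transition in each row to move monotonically across columns --- before the per-column unimodality and the global interval partition of $\mathbb{Z}$ can be established.
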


This is an extremely strong utility-maximization guarantee: {\em
every} potential user $u$, no matter what its side information and
preferences, derives as much utility from the geometric mechanism as
it does from interacting with a differentially private mechanism $M_u$
that is optimally tailored to $u$. We reiterate that the prior from
the utility model plays no role in the definition of privacy, which is
the standard, worst-case (over adversaries with arbitrary
side-information and intent) guarantee provided by differential
privacy. 
We emphasize that while the geometric mechanism is user-independent
(all users see the same distribution over responses), different users
remap its responses in different ways, as informed by their individual
prior distributions and loss functions.  
Rephrasing Theorem~\ref{thm:main}, for every user there is an 
optimal mechanism for it that factors into a user-independent
part---the $\alpha$-geometric mechanism---and a user-specific
computation that can be delegated to the user. (See
Figure~\ref{fig:factor}.)

Theorem~\ref{thm:main} shows how to achieve the same utility as a
user-specific optimal mechanism without directly implementing one.
Direct user-specific optimization would clearly involve several
challenges. 
First, it would require advance knowledge or elicitation of user
preferences, which we expect is impractical in most applications.
And even if a mechanism was privy to the various preferences of its users,
it would effectively need to answer the same query in different ways
for different users, which in turn degrades its differential privacy
guarantee. 

In Theorem~\ref{thm:main}, the restriction to oblivious mechanisms is,
in a precise sense, without loss of generality. (See
Section~\ref{sec:oblivious}.) The restriction to the range $N$
effectively requires that the mechanism output is a legitimate query
result for some database; this type of property is called
``consistency'' in the literature (e.g.~\cite{Consistency}).

\section{Related Work}

Differential privacy is motivated in part by the provable
impossibility of absolute privacy against attackers with arbitrary side
information~\cite{Survey}. 
One interpretation of differential privacy
is: no matter what prior distribution over databases a
potential attacker has, its posterior after interacting with a
differentially private mechanism is almost independent of whether a
given user ``opted in'' or ``opted out'' of the
database~\cite{BigBang, Semantic}.
Below we discuss the papers in the differential privacy literature
closest to the present work; see~\cite{Survey08} for a recent,
thorough survey of the state of the field.

Dinur and Nissim~\cite{Impossibility} showed that for a database
with $n$ rows, answering $O(n \log^2 n)$ randomly chosen subset count
queries with $o(\sqrt{n})$ error allows an adversary to reconstruct
most of the rows of the database (a blatant privacy breach); see
Dwork et al.~\cite{Decoding} for a more robust impossibility result of
the same type.
Most of the differential privacy literature circumvents these
impossibility results by focusing on interactive models where a
mechanism supplies answers to only a sub-linear (in $n$) number of
queries. Count queries (e.g.~\cite{Impossibility,Datamining}) and
more general queries (e.g.~\cite{BigBang,Smooth}) have been studied
from this perspective.

Blum et al.~\cite{Learning} take a different approach by restricting
attention to count queries that lie in a restricted class; they
obtain non-interactive mechanisms that provide simultaneous good
accuracy (in terms of worst-case error) for all count queries from a
class with polynomial VC dimension.
Kasiviswanathan et al.~\cite{Kavi08} give further results for privately
learning hypotheses from a given class.

The use of abstract ``utility functions'' in McSherry and
Talwar~\cite{MD} has a similar flavor to our use of loss functions,
though the motivations and goals of their work and ours are unrelated.
Motivated by pricing problems,
McSherry and Talwar~\cite{MD} design differentially private mechanisms
for queries that can have very different values on neighboring
databases (unlike count queries); they do not consider users with side
information (i.e., priors) and do not formulate a notion of mechanism
optimality (simultaneous or otherwise).

Finally, in recent and independent work, McSherry and Talwar
(personal communication, October 2008) also apply linear programming
theory in the analysis of privacy mechanisms.  Again, their goal is
different: they do not consider a general utility model, but instead
ask how expected error must scale with the number of queries answered
by a differentially private mechanism.

\section{Proof of Main Result } \label{sec:simul}

This section proves Theorem~\ref{thm:main}. The proof has three
high-level steps. 

\begin{enumerate}

\item For a given user $u$, we formulate the problem of determining
the differentially private mechanism that minimizes expected loss
as a solution to a linear program (LP). The objective function of this
LP is user-specific, but the feasible region is not. 

\item We identify several necessary conditions met by every privacy
mechanism that is optimal for some user.

\item For every privacy mechanism that satisfies these conditions, we
  construct a remap $y$ such that $y \compose x^G = x$. By assumption,
  a rational user employs an ``optimal remap'' of $x^G$, so the
  mechanism induced by this map  must be optimal for the user $u$.

\end{enumerate}

Fix a database size $n$ and a privacy level $\alpha$. 
Theorem~\ref{thm:main} is trivially true for the degenerate
cases of $\alpha = 0,1$. So, we assume that $\alpha \in (0,1)$. For
every fixed user with loss function $l$ and prior $p$, the
formulation of privacy constraints in Section~\ref{sec:dp} together
with the objective function~\eqref{eq:utility} yields the following LP
whose solution is an optimal mechanism for this user.

\begin{footnotesize}
\begin{eqnarray}
\nonumber
\mbox{User-specific LP:} \quad \quad && \\ 
\mbox{minimize} && \sum_{i \in N}   p_i \sum_{r \in N} x_{ir} \cdot l(i,r) \\ 
x_{ir} -  \alpha \cdot x_{(i+1)r}  \geq 0 &&  \forall r \in N \setminus \{n\} ,\, \forall i \in N  \label{pos} \\
\alpha \cdot x_{ir} -   x_{(i+1)r}  \leq 0 &&  \forall r \in N \setminus \{n\} ,\, \forall i \in N  \label{neg} \\
\sum_{r \in R} x_{ir} = 1 && \forall i \in N \label{prob} \\
x_{ir} \geq 0 && \forall i \in N, \forall r \in N \label{trivial}
\end{eqnarray}
\end{footnotesize}

Since the LP is bounded and feasible, we have the following
(e.g.~\cite{bert}).

\begin{lemma} \label{lem:lp}
Every user-specific LP has an optimal solution that is a vertex.
\end{lemma}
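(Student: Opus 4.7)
The plan is to invoke the standard fact from linear programming that a bounded, feasible LP over a polyhedron with at least one vertex attains its optimum at a vertex (see, e.g., Bertsimas--Tsitsiklis), and simply verify that the user-specific LP meets these hypotheses. Since the range is restricted to $N = \{0,\ldots,n\}$, the LP has only $(n+1)^2$ variables, so we are in the finite-dimensional setting and no functional-analytic subtleties arise.

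First, I would verify \emph{feasibility}: the $\alpha$-geometric mechanism $x^G$ defined in the example in Section~\ref{sec:dp} satisfies all four constraint families. Constraints \eqref{pos} and \eqref{neg} follow from the $\alpha$-differential privacy of $x^G$ already argued in that example; constraints \eqref{prob} and \eqref{trivial} hold because $x^G$ outputs a legitimate probability distribution for every input. (If one worries that $x^G$'s range is not contained in $N$, one can instead exhibit the trivial feasible point $x_{ir} = 1/(n+1)$ for all $i,r$, which is $\alpha$-differentially private for every $\alpha \in (0,1)$.)

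Second, I would verify \emph{boundedness} of the feasible region. Combining the equality constraints \eqref{prob} with the nonnegativity constraints \eqref{trivial} forces $x_{ir} \in [0,1]$ for every $i,r \in N$, so the feasible region is contained in the unit cube $[0,1]^{(n+1)^2}$. Hence the feasible region is a bounded polyhedron, i.e., a polytope, and any non-empty polytope in finite dimensions has at least one vertex.

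Third, I would verify that the objective is \emph{bounded below}. Since the loss function $l$ is nonnegative by assumption (Section~\ref{sec:model}) and the $p_i$ and $x_{ir}$ are nonnegative, the objective $\sum_i p_i \sum_r x_{ir} l(i,r)$ is bounded below by $0$.

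Given feasibility, a non-empty bounded feasible region with vertices, and a bounded objective, the standard LP theorem applies: the set of optimal solutions is a non-empty face of the polytope, and every non-empty face of a polytope contains a vertex of the polytope, which is therefore an optimal vertex solution. There is really no obstacle here — the lemma is a routine instantiation of a textbook LP fact, and the only thing to double-check is that the feasible region is bounded (which it is, essentially because the probability-sum constraint \eqref{prob} pins the variables into the unit cube).
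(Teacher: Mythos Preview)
Your proposal is correct and takes essentially the same approach as the paper, which simply asserts that the LP is bounded and feasible and defers to Bertsimas--Tsitsiklis for the existence of an optimal vertex. If anything, you are more explicit than the paper in checking feasibility (via a concrete feasible point) and boundedness (via the unit-cube containment), but the underlying argument is identical.
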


For the rest of this section, fix a user with prior $\{p_i\}$ and a
loss function $l(i,r)$ that is monotone in $|i-r|$ for every $i \in
N$. Fix a mechanism $x$ that is optimal for this user, and also a
vertex of the polytope of the user-specific LP. Vertices can  be
uniquely identified by the set of all constraints that are tight at
the vertex. This motivates us to characterize the {\em state of
  constraints} (slack or tight) of mechanisms that are optimal for
some user.

We now begin the second step of the proof.  We will view $x$ as a
$(n+1) \times (n+1)$-matrix where rows correspond 
to query results (inputs) and columns correspond to query responses
(outputs). 
We state our necessary conditions in terms of an $n \times (n+1)$
\emph{constraint} matrix~$C$ associated with
the mechanism $x$. Row $i$ of the constraint matrix corresponds to
rows $i$ and $i+1$ of the corresponding mechanism. Every entry of
$C(i,r)$, for $i \in N \sm {n}$, $r \in N$ takes on exactly one of four
values. If $x_{ir} = x_{(i+1)r} =0$ then $C(i,r) = \na$. If $x_{ir},
x_{(i+1)r} \neq \na$, then there are three possibilities. If
$\alpha \cdot  x_{ir} = x_{(i+1)r}$ then $C(i,r) = \one$. If
$x_{ir} = \alpha \cdot x_{(i+1)r}$ then $C(i,r) = \mone$. Otherwise  
$C(i,r) = \slack$.

\begin{figure} 
\begin{scriptsize}
\begin{center}
  \begin{tabular}{|c| c | c | c | c| c|c|}
    \hline
  \mbox{input/output} & 0 & 1 & 2 & 3 & 4 & 5 \\ \hline
  0 &  2/3  &   0 & 1/4  &  1/24  &  1/48 &   1/48 \\ \hline 
  1 &  1/3  & 0 &1/2&   1/12  &     1/24 &     1/24 \\ \hline
  2 &  1/6 & 0 & 1/2  &  1/6 &   1/12 & 1/12 \\ \hline
  3 &  1/12 & 0 & 1/4  & 1/3  &  1/6 & 1/6 \\ \hline
  4 & 1/24  & 0 &  1/8 & 1/6 & 1/3 & 1/3 \\ \hline
  5 & 1/48 & 0 & 1/16  & 1/12 & 1/6 & 2/3  \\ \hline
  \end{tabular}
\end{center}
\caption{The database size $n$ is $5$. 
Figure shows an optimal $1/2$-differentially private mechanism for a
user with prior $1/4, 0, 1/4, 0, 1/4, 1/4$ on the six possible results
and the loss function $l(i,r) = |i-r|^{1.5}$} \label{fig:optA}
\end{scriptsize}
\end{figure}

\begin{figure} 
\begin{center}
\begin{small}
  \begin{tabular}{|c| c | c | c | c| c|c|}
    \hline
  \mbox{input/output} & 0 & 1 & 2 & 3 & 4 & 5 \\ \hline
  0 &  $\one$  &  \na & $\mone$  &  $\mone$  &  $\mone$ &  $\mone$ \\ \hline 
  1 &  $\one$  & \na & \slack&  $\mone$  &     $\mone$ &     $\mone$ \\ \hline
  2 &  $\one$ & \na & $\one$  &  $\mone$ &   $\mone$ & $\mone$ \\ \hline
  3 &  $\one$ & \na & $\one$  &  $\one$  &  $\mone$ & $\mone$ \\ \hline
  4 &  $\one$  & \na & $\one$ & $\one$ &  $\one$ & $\mone$ \\ \hline
  \end{tabular}
\end{small}
\end{center}
\caption{The constraint matrix for the mechanism from Figure~\ref{fig:optA}. } \label{fig:constraintA}
\end{figure}

\begin{example}
Figure~\ref{fig:optA} shows an optimal, $1/2$-differentially private mechanism for a specific
user. Figure~\ref{fig:constraintA} lists the constraint matrix of this
mechanism. This mechanism can be derived from the $1/2$-geometric
mechanism by mapping every negative number to $0$, every number larger
than $5$ to $5$, $1$ to $2$ and the other numbers to themselves.
\end{example}

The constraint matrix is well defined: since
$\alpha<1$, at most one of $\alpha \cdot  x_{ij} = x_{(i+1)j}$ or
$x_{ij} = \alpha \cdot x_{(i+1)j}$ holds (or else both are zero).
Also, using that $\alpha > 0$,
$x_{ij} = 0$ implies that $x_{kj}=0$ for all $k$. 
Thus every column of $C$ is either all $\na$'s, which we
then call a {\em $\na$-column}, or has no $\na$ entries, which we then call
a {\em non-$\na$ column}.

By definition, the constraint matrix encodes the state of all the
privacy constraints~\eqref{pos} and~\eqref{neg}. The constraint matrix
also implicitly encodes the state of the inequality
constraints~\eqref{trivial}.  First, $x_{i,j} > 0$ if and only if the $j$th
column of $C$ is non-$\na$.
Also, we can assume that $x_{ij} < 1$ for all
$i,j \in N$; otherwise, since $\alpha >0$, $x$ has singleton support
and the proof of the theorem is trivial.
Since vertices are uniquely identified by the set of all the
constraints that are tight at the vertex, we have the following.

\begin{observation} \label{fac:char}
Every mechanism that is a vertex of the polytope of the user-specific LP has a unique constraint matrix.
\end{observation}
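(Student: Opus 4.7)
The plan is to invoke the standard LP fact that a vertex of a polytope is uniquely determined as the single solution of the linear system formed by its tight constraints, and then show that the constraint matrix $C$ records exactly which constraints are tight at the corresponding mechanism.

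First, I would verify that $C$ faithfully encodes the tight/slack status of every inequality constraint of the user-specific LP. For a privacy constraint of type~\eqref{pos} at indices $(i,r)$, the entry $C(i,r) = \mone$ is \emph{by definition} equivalent to $x_{ir} = \alpha \cdot x_{(i+1)r}$, so~\eqref{pos} is tight iff $C(i,r) = \mone$; symmetrically, \eqref{neg} is tight iff $C(i,r) = \one$. An entry $C(i,r) = \slack$ certifies that both~\eqref{pos} and~\eqref{neg} hold strictly, while $C(i,r) = \na$ certifies that they are tight only trivially through $x_{ir}=x_{(i+1)r}=0$. For the nonnegativity constraints~\eqref{trivial}, the preceding paragraph (using $\alpha \in (0,1)$, so that $x_{ir}=0$ for some $i$ forces the whole column to vanish) shows $x_{ir}=0$ iff column $r$ of $C$ is a $\na$-column. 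Finally, the probability constraints~\eqref{prob} are tight at every feasible point and contribute no distinguishing information.

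Second, I would finish by the one-line LP argument: if $x$ and $x'$ are two vertices of the user-specific polytope sharing the same constraint matrix $C$, then by the preceding paragraph they have exactly the same set $T$ of tight constraints. Since $x$ is a vertex, $T$ contains $(n+1)^2$ linearly independent constraints, so the affine subspace cut out by $T$ is a single point. Both $x$ and $x'$ lie in this affine subspace, whence $x=x'$ and the constraint matrix of a vertex is unique to it.

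I do not expect a serious obstacle: the content of the observation is essentially bookkeeping, reducing to a careful check that each of the four labels $\one, \mone, \slack, \na$ correctly tracks whether the corresponding inequality constraints are tight. The only subtle points are (i) the use of $\alpha > 0$ to conclude that a zero entry propagates throughout a column, ensuring the $\na$-column dichotomy so that $C$ captures nonnegativity tightness; and (ii) the simplification $x_{ir}<1$ granted just before the observation, which lets us ignore the degenerate case where some constraint in~\eqref{prob} collapses to a nonnegativity equality we have not explicitly labeled. Both simplifications are given, so the argument reduces to the standard vertex-uniqueness fact.
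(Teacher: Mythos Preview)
Your proposal is correct and follows exactly the paper's approach: the paper's entire argument is the sentence immediately preceding the observation (``Since vertices are uniquely identified by the set of all the constraints that are tight at the vertex, we have the following''), and your write-up is simply a careful unpacking of that sentence together with the bookkeeping that $C$ encodes tightness of the privacy and nonnegativity constraints. One minor remark: in your second paragraph you assume both $x$ and $x'$ are vertices, but your argument only needs $x$ to be a vertex (so that $T$ has full rank), which is exactly how the paper later applies the observation in Lemma~\ref{lem:geomup} to a mechanism not known a priori to be a vertex.
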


Let $s$ denote the number of $\slack$ entries of $C$. Let $s_i$ denote
the number of $\slack$ entries in the $i$th non-$\na$ column. Let $S_i
= \sum_{k \leq i} s_i$. Let $z$ denote the number of $\na$-columns
of~$C$. The next few lemmas, Lemma~\ref{lem:all},~\ref{lem:row}
and~\ref{lem:increase}, use the fact that $x$ is an optimal mechanism
to show that the \emph{rows} of $C$ must exhibit certain structure.
Corollary~\ref{cor:manyslacks} and Lemma~\ref{lem:exactly}
additionally use that $x$ is a vertex to show that the total number of
$\slack$ entries of $C$ is equal to the number of
$\na$-columns. Unless otherwise stated, we \emph{ignore $\na$-columns
of~$C$}. The following lemma holds for every feasible mechanism.

\begin{lemma} \label{lem:all}
No row of the constraint matrix is either all $\one$'s or all $\mone$'s.
\end{lemma}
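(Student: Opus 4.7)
The plan is to derive a contradiction directly from the probability normalization constraint~\eqref{prob}, without needing optimality or the vertex property. First I would recall the key observation about $\na$-columns already established before the lemma: if the $r$-th column of $C$ is a $\na$-column, then $x_{ir} = 0$ for every $i \in N$, so such columns contribute $0$ to every row sum of $x$. Therefore, when we add up row $i$ of the mechanism we may restrict the sum to the non-$\na$ columns, and this restricted sum still equals $1$ for every $i$.

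Next I would suppose, for contradiction, that row $i$ of $C$ consists entirely of $\one$'s across all non-$\na$ columns. By the definition of $\one$, this means $x_{(i+1)r} = \alpha \cdot x_{ir}$ for every non-$\na$ column $r$. Summing this equality over all non-$\na$ columns and using the previous paragraph yields
\begin{equation*}
1 \;=\; \sum_{r} x_{(i+1)r} \;=\; \alpha \sum_{r} x_{ir} \;=\; \alpha,
\end{equation*}
which contradicts our standing assumption $\alpha \in (0,1)$. The case where row $i$ is all $\mone$'s is symmetric: it gives $x_{ir} = \alpha \cdot x_{(i+1)r}$ for all non-$\na$ columns, and summing yields $1 = \alpha$ again (equivalently, $1 = 1/\alpha$ after dividing), which is the same contradiction.

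I expect no real obstacle here: the lemma is essentially a one-line consequence of the fact that an $\alpha$-factor cannot relate two probability distributions entry-by-entry when $\alpha \neq 1$. The only mild subtlety is remembering to exclude the $\na$-columns from the sum, which is handled by the preceding observation that every $\na$-column is identically zero in $x$. Consequently the proof needs no appeal to the structure of the loss function $l$, the prior $\{p_i\}$, or the fact that $x$ is a vertex of the user-specific LP — it applies to every feasible $\alpha$-differentially private oblivious mechanism with $\alpha < 1$.
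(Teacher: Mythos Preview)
Your proof is correct and follows essentially the same argument as the paper: both derive a contradiction with the probability constraint~\eqref{prob} by summing the relation $x_{(i+1)r} = \alpha\, x_{ir}$ (or its reciprocal) over the columns. Your version is slightly more explicit about excluding the $\na$-columns and lands on the exact equality $1=\alpha$, whereas the paper phrases it as the strict inequality $\sum_r x_{(i+1)r} < 1$, but the substance is identical.
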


\begin{proof}
Suppose row $i$ of the constraint matrix is all $\one$'s. So, for all
$r \in N$, $\alpha \cdot x_{ir} = x_{(i+1)r}$  and because $\alpha<1$,
$x_{ir} > x_{(i+1)r}$. Since $\sum_{r} x_{ir} = 1$, $\sum_{r}
x_{(i+1)r} < 1$, contradicting the feasibility of the mechanism. 
The other case is analogous.
\end{proof}

The next, key lemma relies on the monotonicity of the loss function
$l$ and the optimality of $x$.

\begin{lemma} \label{lem:row}
Every row of the constraint matrix $C$ has the following pattern: some
$\one$'s followed by at most one $\slack$ followed by some $\mone$'s.
\end{lemma}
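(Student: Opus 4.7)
The plan is to prove the lemma by contradiction: assume some row $i$ of $C$ violates the claimed pattern, and exhibit a feasible direction in the user-specific LP along which the objective~\eqref{eq:utility} strictly decreases, contradicting the optimality of~$x$. For bookkeeping I will reinterpret the entries of $C$ as ``column ratios'': on each non-$\na$ column $r$, let $\rho_r := x_{(i+1)r}/x_{ir}$, so that $C(i,r)$ equals $\one$, $\slack$, or $\mone$ precisely when $\rho_r$ equals $\alpha$, lies in $(\alpha,1/\alpha)$, or equals $1/\alpha$. The lemma is then equivalent to the statement that the sequence $(\rho_r)$ is non-decreasing in $r$ across non-$\na$ columns and takes a value strictly inside $(\alpha,1/\alpha)$ for at most one column.

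For the non-decreasing part, suppose $r_1 < r_2$ with $\rho_{r_1} > \rho_{r_2}$. I would consider the local swap
\[
\Delta x_{ir_1} = +\epsilon,\quad \Delta x_{ir_2} = -\epsilon,\quad \Delta x_{(i+1)r_1} = -\epsilon,\quad \Delta x_{(i+1)r_2} = +\epsilon,
\]
with all other coordinates of $\Delta$ initially zero. Row-sum equalities~\eqref{prob} are preserved, and the two new ratios at columns $r_1,r_2$ move strictly toward each other, so for sufficiently small $\epsilon>0$ the privacy constraints~\eqref{pos} and~\eqref{neg} binding rows $i,i+1$ on these two columns remain satisfied. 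The only constraints that could be endangered are those binding rows $(i-1,i)$ or $(i+1,i+2)$ at columns $r_1,r_2$, and only when the corresponding entries of $C$ are tight in the direction opposite to the perturbation. In each such case I extend $\Delta$ to the neighbouring row along the affected column by the minimum compensating amount needed to restore feasibility, making a symmetric change in the paired column to keep that row sum equal to $1$; since there are only finitely many rows, the cascade terminates at a non-offending entry or at row $0$ or $n$. This yields a bona fide feasible LP direction $\Delta$ supported on columns $r_1$ and $r_2$ only.

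Using the monotonicity of $l$ in $|i-r|$ together with the inequalities $r_1 < r_2$ and $\rho_{r_1} > \rho_{r_2}$, I would show that for a suitable orientation of the swap, the objective change $\sum_{j} p_j\bigl(l(j,r_1)\Delta x_{jr_1} + l(j,r_2)\Delta x_{jr_2}\bigr)$ is strictly negative, contradicting the assumed optimality of $x$. The ``at most one $\slack$'' half of the pattern follows from a similar but simpler local perturbation: if $r_1<r_2$ were both $\slack$ columns, moving $\epsilon$ mass between $r_1$ and $r_2$ in row $i$ alone (or in row $i+1$ alone) is automatically feasible at $(i,i+1)$, and monotonicity of $l$ guarantees that at least one of $l(i,r_1) - l(i,r_2)$ or $l(i+1,r_1) - l(i+1,r_2)$ is nonzero, since both vanishing would force $r_1,r_2$ to be symmetric about both $i$ and $i+1$, which is impossible. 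The main obstacle I anticipate is the cascade accounting: contrary to the naive hope, the compensating adjustments in adjacent rows can scale by factors of $1/\alpha$ rather than decay, so a careful case analysis---tracking how the cascade interacts with the adjacent-row entries of $C$ and with the monotonicity of $l$---will be needed to verify that the principal decrease of the objective at rows $(i,i+1)$ is not swamped by contributions from the cascade rows.
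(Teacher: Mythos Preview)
Your proposal has a genuine gap, and in fact two of them.

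First, the ``principal decrease'' at rows $i,i+1$ is not established. Feasibility at the $(i,i+1)$ privacy constraints forces the orientation of your swap: you must move $\rho_{r_1}$ down and $\rho_{r_2}$ up, since the opposite orientation can violate a tight constraint (e.g.\ if $C(i,r_1)=\mone$). With the orientation fixed, the objective change from rows $i,i+1$ alone is
\[
\epsilon\Bigl[p_i\bigl(l(i,r_1)-l(i,r_2)\bigr)+p_{i+1}\bigl(l(i+1,r_2)-l(i+1,r_1)\bigr)\Bigr],
\]
and monotonicity of $l$ in $|i-r|$ gives these two bracketed terms \emph{opposite} signs in general (e.g.\ when $i<i+1\le r_1<r_2$). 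Their sum has no guaranteed sign, so you cannot claim a decrease even before the cascade enters. Your ``suitable orientation'' escape hatch is not available.

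Second, as you yourself flag, the cascade can grow by $1/\alpha$ at each step and there is no mechanism in your argument to bound its contribution. This is not a detail to be filled in later; it is the whole difficulty.

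The paper sidesteps both problems with a single idea you are missing. Having found $l<m$ with $C(i,l)\neq\one$ and $C(i,m)\neq\mone$, it does \emph{not} perturb rows $i,i+1$ locally. Instead it scales the entire block of rows $i'\le i$ in column $m$ by $(1-\delta)$ and adds the removed mass to column $l$ in those same rows: $x'_{i'l}=x_{i'l}+\delta x_{i'm}$. Because ratio constraints are preserved under scaling a column and under adding columns, every privacy constraint \emph{internal} to the block $\{0,\ldots,i\}$ is automatically maintained---no cascade is needed. The only endangered constraints are the two at the boundary $(i,i+1)$, and those are precisely the ones assumed slack. For the objective, one first chooses which block to perturb: if $i\le (l+m)/2$ perturb rows $i'\le i$, otherwise perturb rows $i'\ge i+1$ (swapping the roles of $l$ and $m$). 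This guarantees $|i'-l|\le|i'-m|$ (respectively $|i'-m|\le|i'-l|$) for every perturbed row, so monotonicity of $l$ makes the loss change nonpositive in every term, strictly in at least one. The ``at most one $\slack$'' clause is not argued separately; it falls out of the same characterization, since two $\slack$'s at $l<m$ give $C(i,l)\neq\one$ and $C(i,m)\neq\mone$.
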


\begin{proof}
Suppose not; then there exists a row index $i$ and 
column indices $l < m$ such that $C(i,l) \neq \one$ and
$C(i,m) \neq \mone$. There are two cases: $i \leq (l + m)/2$ and 
$i > (l + m)/2$. 
We prove the first case, and suggest modifications to the proof for
the second. We now define a feasible mechanism $x'$ with strictly
smaller expected loss, contradicting the optimality of $x$. We
multiply the numbers $x_{i'm}$ for all $i' \leq i$ by a $1-\delta$
factor for some small $\delta > 0$ and set $x_{i'l} = x_{i'l}+ \delta
\cdot x_{i'm}$. 

Because $i \leq (l + m)/2$, $|i' -l| \leq |i'-m|$, and, so, for all
$i' \leq i$ the expected loss strictly decreases for strictly
monotone loss functions and priors with support $N$.\footnote{If the
prior does not have full support or if the loss function is not
strictly monotone, we can reach the same conclusion using
perturbations and a limiting argument.}
We now discuss feasibility $x'$. Notice that we modify only two
columns of $x$. The set of constraints~\eqref{prob} are preserved by
feasibility of $x$ and the definition of $x'$. Because $\alpha >0$,
$x_{i'm} >0$ for all $i' \in N$ and for sufficiently small $\delta$
the set of constraints~\eqref{trivial} continue to hold. For all
$j<i$, privacy constraints that involve the numbers $\{x_{jm}\}$
continue to hold as all we are doing is scaling. For all $j<i$,
privacy constraints that only involve the numbers in the sets
$\{x_{jm}\}$ and $\{x_{jl}\}$ continue to hold since differential
privacy is preserved by scaling and adding. Finally, the privacy
constraints involving the numbers $x_{im}, x_{(i+1)m}, x_{il},
x_{(i+1)l}$ are preserved for sufficiently small $\delta$ since
$C(i,l) \neq \one$ and $C(i,m) \neq \mone$.

For the second case, when $i > (l + m)/2$, we interchange the
roles of the columns $l$ and $m$ and modify rows $i' \geq i+1$ rather
than $i' \leq i$, i.e., we multiply $x_{i'l}$ for all $i' \geq i+1$ by
$1-\delta$ for a small, positive $\delta$ and set $x_{i'm} = x_{i'm}+
\delta \cdot x_{i'l}$. The proof can be modified accordingly.
\end{proof}

The next lemma relates adjacent rows of $C$; it uses the previous lemma combined with the fact that each row of $x$ is  a valid probability distribution. 

\begin{lemma} \label{lem:increase}
For all $i \in 0 \ldots n-2$, row $i+1$ of $C$ has at least one more $\one$ than row $i$ of $C$; unless row $i+1$ has a $\slack$ in which case row $i+1$ has at least as many $\one$'s as row $i$.  
\end{lemma}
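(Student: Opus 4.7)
The plan is to negate the lemma and derive a contradiction by analyzing, for each row $i$ of $C$, the probability mass of $x$ concentrated in the ``$\one$-block'' of that row.

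First I set up notation using Lemma~\ref{lem:row}. Let $a_i$ be the number of $\one$'s in row $i$ of $C$ (ignoring $\na$-columns), $\epsilon_i \in \{0,1\}$ the indicator that row $i$ contains a $\slack$, $A_i = \{1,\ldots,a_i\}$, $S_i = a_i+1$ (when $\epsilon_i=1$), and $B_i$ the remaining $\mone$-columns. Define $\sigma_i = \sum_{r\in A_i} x_{ir}$, $\tau_i = \sum_{r\in B_i} x_{ir}$, $\mu_i = x_{iS_i}$, and $y_i = x_{(i+1)S_i}$ (taking $\mu_i=y_i=0$ when $\epsilon_i=0$). Applying~\eqref{prob} to rows $i$ and $i+1$ of $x$ and substituting the $\one$/$\mone$ ratios of row $i$ of $C$ gives
\begin{align*}
\sigma_i + \epsilon_i\mu_i + \tau_i &= 1, \\
\alpha\sigma_i + \epsilon_i y_i + \tau_i/\alpha &= 1.
\end{align*}
Solving this $2\times 2$ linear system for $\sigma_i$ and for $\sigma_i + \epsilon_i\mu_i$ produces the key identity
\[
\sigma_i = \tfrac{1}{1+\alpha} - \tfrac{\epsilon_i(\mu_i-\alpha y_i)}{1-\alpha^2}, \qquad
\sigma_i + \epsilon_i\mu_i = \tfrac{1}{1+\alpha} + \tfrac{\epsilon_i\alpha(y_i-\alpha\mu_i)}{1-\alpha^2}.
\]
When $\epsilon_i=1$, the entry $C(i,S_i)=\slack$ is by definition neither $\one$ nor $\mone$, so feasibility forces $\alpha\mu_i < y_i < \mu_i/\alpha$ strictly; both $\mu_i-\alpha y_i$ and $y_i-\alpha\mu_i$ are therefore nonnegative. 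Consequently $\sigma_i \leq \tfrac{1}{1+\alpha}$ and $\sigma_i + \epsilon_i\mu_i \geq \tfrac{1}{1+\alpha}$, with equality in either bound exactly when $\epsilon_i=0$.

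To close the argument I suppose the lemma fails: $a_{i+1} + \epsilon_{i+1} \leq a_i$ (equivalently, $\epsilon_{i+1}=0$ with $a_{i+1}\leq a_i$, or $\epsilon_{i+1}=1$ with $a_{i+1}\leq a_i-1$). Under this hypothesis the first $a_{i+1}$ non-$\na$ columns---together with column $S_{i+1}=a_{i+1}+1$ if $\epsilon_{i+1}=1$---all lie in $A_i$, so row $i$ of $C$ has $\one$ at each such column and thus $x_{(i+1)r}=\alpha x_{ir}$ there. Summing the row-$i+1$ values over this head gives
\[
\sigma_{i+1} + \epsilon_{i+1}\mu_{i+1}
= \alpha\sum_{r=1}^{a_{i+1}+\epsilon_{i+1}} x_{ir}
\leq \alpha\sigma_i
\leq \tfrac{\alpha}{1+\alpha}.
\]
But applying the key identity at index $i+1$ yields $\sigma_{i+1}+\epsilon_{i+1}\mu_{i+1}\geq \tfrac{1}{1+\alpha}$, contradicting $\tfrac{1}{1+\alpha}\leq \tfrac{\alpha}{1+\alpha}$ since $\alpha\in(0,1)$. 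Hence $a_{i+1}\geq a_i+1-\epsilon_{i+1}$, which is the lemma.

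The main obstacle is establishing the key identity: the algebra must carry the $\epsilon_i$ bookkeeping through both row-sum equations, and one must notice that the \emph{strict} slack inequality $\alpha\mu_i<y_i<\mu_i/\alpha$ is precisely what simultaneously produces the upper bound $\sigma_i\leq 1/(1+\alpha)$ on the pure $\one$-mass and the matching lower bound $\sigma_i+\epsilon_i\mu_i\geq 1/(1+\alpha)$ on the extended head-mass. Once these tight companions are in hand, the contradiction falls out of the uniform $\alpha$-scaling of the $\one$-block from row $i$ to row $i+1$ of $x$.
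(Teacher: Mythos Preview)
Your proof is correct and takes essentially the same approach as the paper: both combine the row pattern of Lemma~\ref{lem:row} with the probability constraints~\eqref{prob} on rows $i$, $i+1$, and $i+2$ of $x$ to derive a contradiction from the negated statement. The paper carries this out by directly showing that row $i+2$ of $x$ would sum to more than~$1$, whereas you repackage the same linear algebra as the reusable bound $\sigma_j \le \tfrac{1}{1+\alpha} \le \sigma_j + \epsilon_j\mu_j$ and then observe that the negation forces $\sigma_{i+1}+\epsilon_{i+1}\mu_{i+1}\le \alpha\sigma_i\le \tfrac{\alpha}{1+\alpha}$, violating the bound at $j=i+1$.
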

\begin{proof}
Suppose to the contrary that the pair of rows $i$,$i+1$ violates the condition in the lemma statement, then we will show that row $i+2$ of the mechanism violates the probability constraint~\eqref{prob}, i.e. $\sum_{k}x_{(i+2)k} >1$.

Recall the pattern of a row $i$ of $C$ from
Lemma~\ref{lem:row}. Suppose that row $i$ has $\one$'s in exactly
positions $0 \ldots j-1$. Let $\sum_{k <j} x_{ik} =a$, $x_{ij} =b$,
$\sum_{k >j} x_{ik} =c$ and $x_{(i+1)j} =b'$. Because row $i$ of $x$
must satisfy the probability constraint~\eqref{prob}, we have
\begin{equation}\label{eq:A}
a+b+c=1.
\end{equation}
By assumption, $C(i,k) = \one$, for all $k <j$. By
Lemma~\ref{lem:row}, we have that for all $ k \in j+1 \ldots n$,
$C(i,k) = \mone$ and $C(i,j)$ is either a $\slack$ or a $\mone$. So,
$\sum_{k <j} x_{(i+1)k} = \alpha \cdot a$, and $\sum_{k >j} x_{(i+1)k}
=c/ \alpha$. Because row $i+1$ of $x$ must satisfy the probability
constraint~\eqref{prob}, we have
\begin{equation}\label{eq:B}
\alpha \cdot a+ b' + c/ \alpha=1.
\end{equation}
By assumption the rows $i,i+1$ violate the lemma condition. So, by
Lemma~\ref{lem:row} and the definition of the index $j$, we have
$C(i+1,k) = \mone$ for all columns $k \geq j$. So, $\sum_{k \geq j}
x_{(i+2)k} = 1/\alpha \sum_{k \geq j} x_{(i+1)k} = b'/\alpha +
c/\alpha^2$. Suppose that for each $ k \in 1 \ldots j-1$, $x_{(i+2)k}$
is as small as possible subject to the privacy constraint
(so $x_{(i+2)k} = \alpha \cdot x_{(i+1)k}$); this is the worst case
for our argument. Then, $\sum_{k < j} x_{(i+2)k} = \alpha \sum_{k<j}
x_{(i+1)k} = \alpha^2 \cdot a$. To complete the proof we show that
$\alpha^2 a + b'/\alpha + c/\alpha^2$ is strictly larger than $1$ and
so row $i+2$ of $x$ violates the probability constraint.

Using Equations~\eqref{eq:A} and~\eqref{eq:B} to eliminate $a$ and
$c$, we can show that $\alpha^2 a + b'/\alpha + c/\alpha^2 = 1/\alpha
+ \alpha -1 + b - b'\alpha$, which is at least $1/\alpha + \alpha -1$
because $b - b'\alpha \geq 0$ ($x$ is $\alpha$-differentially
private). Simple algebra shows that $1/\alpha + \alpha -1
>1$ whenever $(\alpha -1)^2 >0$, which holds since $\alpha <1$.
\end{proof}

The next two lemmas relate the number of $\na$-columns of $C$ to the number of $\slack$ entries in it.   

\begin{corollary}
\label{cor:manyslacks}
The number $s$ of $\slack$ entries of $C$ is at least the number $z$ of $\na$-columns of $C$. 
\end{corollary}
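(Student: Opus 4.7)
The plan is to count $\slack$ entries row by row, using the structural lemmas just proven. For each row $i \in \{0,\ldots,n-1\}$, let $o_i$ denote the number of $\one$ entries in row $i$ of $C$ (counting only non-$\na$ columns), and let $p_i \in \{0,1\}$ indicate whether row $i$ contains a $\slack$. By Lemma~\ref{lem:row} every row has at most one $\slack$, so $s = \sum_{i=0}^{n-1} p_i$. Since $C$ has $n+1-z$ non-$\na$ columns and Lemma~\ref{lem:all} forbids a row of all $\one$'s or all $\mone$'s, we have $0 \le o_i \le n-z$, and moreover $o_i = 0$ forces $p_i = 1$ (otherwise the row would be all $\mone$'s).

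The key quantitative input comes from Lemma~\ref{lem:increase}: the sequence $o_0, o_1, \ldots, o_{n-1}$ is nondecreasing, and whenever $o_{i+1} = o_i$ we must have $p_{i+1} = 1$. Split the $n-1$ transitions $i \to i+1$ into $I$ strictly-increasing ones and $E = (n-1) - I$ equal ones. Each strict increase contributes at least $1$ to the telescoping sum, so
\begin{equation*}
I \le o_{n-1} - o_0 \le (n-z) - o_0,
\end{equation*}
and therefore $E \ge (n-1) - (n-z-o_0) = z + o_0 - 1$. The $E$ equal transitions produce $\slack$'s in $E$ distinct rows among $\{1,\ldots,n-1\}$, so
\begin{equation*}
s \ge E + p_0 \ge (z + o_0 - 1) + p_0.
\end{equation*}

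To finish, I would split on $o_0$. If $o_0 \ge 1$, the bound already gives $s \ge z$. If $o_0 = 0$, then by the observation above $p_0 = 1$, and again $s \ge (z - 1) + 1 = z$. This handles both cases, completing the proof.

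I do not foresee any real obstacle here: once Lemmas~\ref{lem:all}, \ref{lem:row}, and~\ref{lem:increase} are in hand, the statement reduces to a short counting argument driven by the telescoping inequality for $o_{n-1} - o_0$ together with the corner case $o_0 = 0$, which is exactly the case the Lemma~\ref{lem:all} boundary condition was tailored to cover.
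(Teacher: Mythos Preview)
Your proof is correct and follows essentially the same counting argument as the paper: both use Lemma~\ref{lem:increase} to telescope the growth in the number of $\one$'s across the $n-1$ row transitions, bound the last row's $\one$-count by $n-z$ via Lemma~\ref{lem:all}, and handle the first row via Lemma~\ref{lem:all} as well. The only cosmetic difference is that the paper packages your case split on $o_0$ into the single inequality $o_0 + p_0 \ge 1$ (their $a+b\ge 1$), which immediately yields $s \ge z + (o_0+p_0) - 1 \ge z$ without branching.
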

\begin{proof}
Suppose there are $a$ $\one$'s and $b$ $\slack$'s in the first row of
the constraint matrix. From Lemma~\ref{lem:all}, the first row cannot
consist entirely of $\mone$'s, i.e. $a+b \geq 1$.  By Lemma
\ref{lem:increase}, the number of $\one$'s in the last row is at least
$a + (n-1) - (s-b)$.  i.e.  at least $n-s$. By Lemma~\ref{lem:all},
the number of $\one$'s in the last row must be at most $n-z$. Chaining
the two inequalities gives us the result.
\end{proof}

Unlike the previous lemmas, the next lemma uses the fact that $x$ is a vertex. 

\begin{lemma} \label{lem:exactly}
The number $s$ of $\slack$ entries of $C$ is equal to the number $z$ of $\na$-columns of $C$. 
\end{lemma}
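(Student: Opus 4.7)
The plan is to use the vertex property of $x$: the gradients of the constraints tight at $x$ must span all of $\mathbb{R}^{(n+1)^2}$. Since Corollary~\ref{cor:manyslacks} already gives $s \geq z$, it suffices to derive $s \leq z$ from this rank condition.

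The first step is to decompose $\mathbb{R}^{(n+1)^2}$ along the $\na$-columns. In each $\na$-column all $n+1$ entries of $x$ are zero, so the $n+1$ non-negativity constraints $x_{ir} \geq 0$ are tight there; their gradients are standard basis vectors and jointly span the full $(n+1)z$-dimensional subspace of coordinates indexed by $\na$-columns. Moreover, every tight privacy constraint inside a $\na$-column has a gradient of the form $e_{ir} - \alpha e_{(i+1)r}$ (or $-\alpha e_{ir} + e_{(i+1)r}$), which already lies in this span, so such constraints contribute no extra rank.

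The vertex condition therefore forces the remaining tight constraints---the $n+1$ probability equalities, plus for each non-$\na$ column $r$ one tight privacy constraint per non-$\slack$ entry of the $r$-th column of $C$---to have rank $(n+1)(n+1-z)$ in the complementary coordinate subspace. Since rank never exceeds cardinality, and the number of non-$\slack$ entries across all non-$\na$ columns is $n(n+1-z) - s$,
\[
(n+1) + n(n+1-z) - s \;\geq\; (n+1)(n+1-z),
\]
which simplifies to $s \leq z$; combining with Corollary~\ref{cor:manyslacks} yields $s = z$. The main obstacle is justifying the decomposition---specifically, checking that the tight privacy constraints within $\na$-columns are redundant given the tight non-negativity constraints, and that the probability and privacy constraints in non-$\na$ columns have been enumerated correctly. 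Once this bookkeeping is in place, the rank-versus-count inequality is routine LP-vertex arithmetic.
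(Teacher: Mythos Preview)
Your argument is correct and follows essentially the same counting strategy as the paper: use the vertex condition to force at least $(n+1)^2$ linearly independent tight constraints, account for the $n+1$ probability equalities and the $z(n+1)$ non-negativity constraints in the $\na$-columns, and conclude that the $\one/\mone$ entries in the non-$\na$ columns must number at least $(n+1)(n-z)$, whence $s \le z$. Your subspace decomposition makes explicit a point the paper leaves implicit---namely that the tight privacy constraints inside $\na$-columns are already in the span of the non-negativity gradients there and so contribute no additional rank---but the underlying argument is the same.
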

\begin{proof}
By Corollary~\ref{cor:manyslacks} all we need to show is that $s \leq
z$. Recall that $x$ is the solution to an LP with
$(n+1)^2$-variables. Since $x$ is a vertex of the polytope of the
user-specific LP, it must be at the intersection of at least $(n+1)^2$
linearly independent constraints.

Let us account for them. The $n+1$ constraints~\eqref{prob} are
tight. Exactly $z(n+1)$ constraints of the type specified
by~\eqref{trivial} are tight---these involve variables $x_{ij}$, for
all $i \in N$, such that $j$ is an index of some $\na$-column of
$C$. Thus, of the remaining privacy constraints, at least $(n+1)(n-z)$
must be tight. Because (by definition of $C$) every such tight
constraint corresponds to a unique $\one$ or $\mone$ entry, there are
at least $(n+1)(n-z)$ such entries. Thus, of the $n(n+1-z)$ entries
which are not $\na$'s, at most $n(n+1-z) - (n+1)(n-z) = z$ of the
entries are $\slack$.
\end{proof}

Lemma~\ref{lem:valid} leverages the conditions on the rows of $C$ established by previous lemmas to establish conditions on the columns of $C$ via a counting argument. This completes the second step of the proof of Theorem~\ref{thm:main}. 

\begin{lemma}\label{lem:valid}
For all $i \in 0 \ldots n-z$,  the $i$th non-$\na$ column of $C$ has
  $\mone$'s in rows indexed $0 \ldots i-1+ S_{i-1}
  $, $\slack$'s in the $s_i$ positions that follow and $\one$'s in the
  remaining positions. 

\end{lemma}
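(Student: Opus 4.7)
The plan is to lift the per-row constraints from Lemmas~\ref{lem:all}, \ref{lem:row}, \ref{lem:increase}, and \ref{lem:exactly} into the claimed per-column pattern, by first sharpening the row-to-row inequality to an equality and then converting the resulting row-indexed $\slack$-count into the column-indexed quantity $S_j$.

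Parametrize each row of $C$: let $t_i$ denote the number of $\one$-entries in row $i$ and let $b_i \in \{0,1\}$ indicate whether row $i$ contains a $\slack$. By Lemma~\ref{lem:row}, the $\one$-entries occupy columns $0,\ldots,t_i-1$, the $\slack$ (if present) sits in column $t_i$, and the $\mone$-entries fill columns $t_i+b_i,\ldots,n-z$; Lemma~\ref{lem:increase} then rephrases as $t_{i+1}+b_{i+1}\geq t_i+1$. Summing this inequality over $i$ and using $\sum_i b_i = s = z$ from Lemma~\ref{lem:exactly} gives $t_{n-1}-t_0\geq (n-1)-(z-b_0)$, while Lemma~\ref{lem:all} applied to rows $0$ and $n-1$ yields $t_0+b_0\geq 1$ and $t_{n-1}+b_{n-1}\leq n-z$. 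Chaining these bounds forces $n-z-b_{n-1}\geq n-z$, so $b_{n-1}=0$ and every inequality above must hold with equality. This pins down the recurrence $t_{i+1}=t_i+1-b_{i+1}$ together with the boundary conditions $t_0+b_0=1$ and $t_{n-1}=n-z$; in particular $t_i$ is non-decreasing with unit or zero steps.

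For a fixed column $j$, the row pattern gives that row $i$ is $\mone$ iff $t_i+b_i\leq j$, is $\slack$ iff $t_i=j$ and $b_i=1$, and is $\one$ iff $t_i>j$, so the monotonicity of $t_i$ immediately forces the three entry types to appear as three consecutive blocks in the column. The block boundaries come from the following bridge identity: whenever $b_i=1$ the $\slack$ in row $i$ lies in column $t_i$, so the prefix of rows with $t_i\leq j$ accounts for exactly the $S_j$ $\slack$-entries in columns $0,\ldots,j$. The tight recurrence then identifies the smallest row $i$ with $t_i>j$ as $j+S_j$, which is where the $\one$-block starts; applying the same computation with $j$ replaced by $j-1$ (and the convention $S_{-1}=0$) shows that the $\mone$-block ends at row $j+S_{j-1}-1$, leaving exactly $s_j$ rows in between for the $\slack$-block.

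The main obstacle is the coupling in the third paragraph: reconciling the row-indexed total $\sum_i b_i=z$ coming from the tight recurrence with the column-indexed counts $S_j$, which requires the observation that each $\slack$ in a row sits in the column determined by the corresponding $t_i$. Once this correspondence is in place, both the $\mone$/$\slack$/$\one$ block structure and the exact row indices follow from the recurrence by direct counting; the earlier tightness step, by contrast, is essentially the bookkeeping of assembling the inequalities from Lemma~\ref{lem:increase} with the two boundary constraints from Lemma~\ref{lem:all}.
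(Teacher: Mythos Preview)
Your argument is correct and genuinely different from the paper's. The paper proves the lemma by induction on non-$\na$ columns, repeatedly invoking Lemmas~\ref{lem:row} and~\ref{lem:increase} at each step to pin down the next column. You instead extract a global closed-form: you first show that all the inequalities $t_{i+1}+b_{i+1}\geq t_i+1$ from Lemma~\ref{lem:increase} must be tight (by sandwiching with Lemma~\ref{lem:all} and the count $s=z$ from Lemma~\ref{lem:exactly}), obtain the exact recurrence $t_{i+1}=t_i+1-b_{i+1}$ with boundary data $t_0+b_0=1$, and then read off the column pattern directly from the formula $t_i=1+i-\sum_{k\le i}b_k$ together with the correspondence ``a $\slack$ in row $i$ sits in non-$\na$ column $t_i$''. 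Your approach is cleaner once the tightness is in hand; the paper's induction avoids the global summation but has to re-derive essentially the same boundary information column by column.

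One small correction: Lemma~\ref{lem:all} applied to row $n-1$ only gives $t_{n-1}\leq n-z$, not $t_{n-1}+b_{n-1}\leq n-z$ (the latter would assert that row $n-1$ has a $\mone$, which is stronger than ``not all $\one$'s''). Fortunately the weaker bound is all you need: combined with $t_{n-1}\geq t_0+(n-1)-(z-b_0)\geq n-z$ you still get $t_{n-1}=n-z$ and tightness everywhere. The conclusion $b_{n-1}=0$ is unjustified, but you never use it downstream, so the rest of the argument goes through unchanged.
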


\begin{proof} 
We do induction on non-$\na$ columns. The base case: The first column does not contain any $\mone$'s, because then by Lemma~\ref{lem:row}, an entire row contains only $\mone$'s, contradicting
Lemma~\ref{lem:all}. By Lemma~\ref{lem:increase} and
Lemma~\ref{lem:row}, a $\slack$ cannot follow a $\one$ and a $\mone$
cannot follow a $\slack$ in the same column. Therefore, the first
column contains $s_0$ $\slack$'s, followed by $\one$'s.

For the inductive step, assume
column $i-1$ has $\mone$'s and $\slack$'s in rows $0 \ldots i-2 +
S_{i-1} $ and so column $i$ has $\mone$'s in these
positions by Lemma~\ref{lem:row}.
We now show by contradiction that $C(i-1+ S_{i-1},i)= \mone$; assume
otherwise. Let the indicator variable $I$ be~1 if $C( i-1+
S_{i-1},i)= \one$ and~0 if $C( i-1+ S_{i-1},i)= \slack$. By the
induction hypothesis, $C( i-1+ S_{i-1},i-1)= \one$ and so by
Lemma~\ref{lem:row}, row $i-1+ S_{i-1}$ has at least $i +I$ $\one$'s
(we index from $0$). By Lemma~\ref{lem:row}, every one of the rows
from row $i+ S_{i-1}$ to row $n-1$ must either contain an additional
$\one$ or expend one of the $s- S_{i-1} - 1 + I$ remaining $\slack$'s.
Thus row $n$ has at least $i +I$ + $n -i-s +1 -I $ = $n+1-s$
$\one$'s. By Corollary~\ref{lem:exactly}, $s=z$ and so the last row of
$C$ is all $\one$'s, which contradicts Lemma~\ref{lem:all}.

We next show that $C( i+ S_{i-1},i) \neq \mone $ and so with an argument similar to the base case, rows $i+ S_{i-1} \ldots n$  of column $i$ consists of $s_i$ zeros, followed by $\one$'s, concluding the proof of the induction step. Suppose $C( i + S_{i-1},i) = \mone$. By the induction hypothesis, $C(i-1 + S_{i-1},i-1) = \one$, thus row $i-1+ S_{i-1}$ contains at least $i$ $\one$'s. By induction $C( i + S_{i-1},i-1) = \one$. So by Lemma~\ref{lem:row}, row $i + S_{i-1}$ does not contain a $\slack$ \emph{and} contains $i$ $\one$'s; this contradicts Lemma~\ref{lem:increase}. 
\end{proof}

This concludes the second step of the proof. For the third step of the
proof, we work with a  finite-range version of the geometric mechanism
$x^G$, called $G$. $G$ is $w \compose x^G$, where $w$ is a
deterministic remap that maps all negative integers to zero, all
integers larger than $n$ to $n$, and all other integers to
themselves. We show, constructively, that there exists a remap
$\hat{y}^u$ such that $x = \hat{y}^u \compose x^G$. This proves the
theorem because the composed remap $\hat{y}^u \compose w$ applied to
$x^G$ induces $x$: $x = (\hat{y}^u \compose w) \compose
x^G$.\footnote{Incidentally, there do exist vertices of the polytope of the
user-specific LP that are not derivable from $G$ by
remapping. Fortunately, these vertices are not optimal for any user.}
Lemma~\ref{lem:remap} clarifies the structure of columns of mechanisms
induced by deterministic remaps of $G$. Lemma~\ref{lem:geomup}
combines Lemma~\ref{lem:remap} and Lemma~\ref{lem:valid} to prove
Theorem~\ref{thm:main}.

\begin{lemma} \label{lem:remap}
Consider a deterministic map that maps the integer sequence $a \ldots b$ (and no others) to a fixed $l \in N$. Then, the $l$th column of the constraint matrix of the mechanism induced by applying this map to $G$ has $\mone$'s in rows
$0,\ldots,a-1$, $\slack$'s in rows $a,\ldots, b-1$, and $\one$'s in the remaining rows.
\end{lemma}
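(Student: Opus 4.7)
The plan is to compute column $l$ of the induced mechanism $M = y \compose G$ explicitly and then read off the entries of its constraint matrix by comparing consecutive rows.

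First, I would write down the entries of $G$. For $0 < r' < n$ one has $G_{ir'} = c\,\alpha^{|i-r'|}$ with $c = (1-\alpha)/(1+\alpha)$; the truncation~$w$ folds the left tail into column $r'=0$ and the right tail into column $r'=n$, so $G_{i,0} = c\,\alpha^i/(1-\alpha)$ and $G_{i,n} = c\,\alpha^{n-i}/(1-\alpha)$. A direct check in all three cases yields the key ratio identity: for every $r' \in N$ and every $i \in \{0,\ldots,n-1\}$,
\[
G_{i+1,r'} = \alpha\, G_{i,r'} \ \text{if}\ r' \le i, \qquad G_{i+1,r'} = \alpha^{-1} G_{i,r'} \ \text{if}\ r' \ge i+1.
\]
This is where the boundary columns $r'=0$ and $r'=n$ must be checked separately; the folded geometric tails cooperate so that the identity still holds there.

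Next, since the map aggregates exactly the integers $a, a+1, \ldots, b$ into response $l$, we have $M_{i,l} = \sum_{r'=a}^{b} G_{i,r'}$, and applying the ratio identity termwise gives
\[
M_{i+1,l} = \alpha \sum_{r'=a}^{\min(i,b)} G_{i,r'} \;+\; \alpha^{-1} \sum_{r'=\max(a,i+1)}^{b} G_{i,r'}.
\]

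Finally, I would case-split on $i$. When $i \le a-1$, every $r' \in \{a,\ldots,b\}$ satisfies $r' \ge i+1$; only the second sum survives and $M_{i+1,l} = \alpha^{-1} M_{i,l}$, i.e.\ $M_{i,l} = \alpha\, M_{i+1,l}$, so $C(i,l) = \mone$. When $i \ge b$, every $r'$ satisfies $r' \le i$; only the first sum survives and $M_{i+1,l} = \alpha\, M_{i,l}$, so $C(i,l) = \one$. When $a \le i \le b-1$, both sums are non-empty with strictly positive terms, and since $\alpha \ne \alpha^{-1}$ for $\alpha \in (0,1)$, the resulting combination lies strictly between $\alpha\, M_{i,l}$ and $\alpha^{-1} M_{i,l}$; neither defining equation for $\one$ nor for $\mone$ holds, so $C(i,l) = \slack$. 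The main obstacle is the careful bookkeeping at the two boundary columns $r'=0$ and $r'=n$, where $G$ is defined by summing a geometric tail rather than a single geometric weight; once the ratio identity is verified there, the clean unified case analysis on $i$ handles all values of $a$ and $b$ uniformly, including the edge cases $a=0$ (no leading $\mone$'s) and $b=n$ (no trailing $\one$'s).
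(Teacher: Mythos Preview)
Your proof is correct and follows essentially the same approach as the paper: both compute column~$l$ of the induced mechanism as $\sum_{r'=a}^{b} G_{i,r'}$, use the ratio identity $G_{i+1,r'}/G_{i,r'}\in\{\alpha,\alpha^{-1}\}$ depending on whether $r'\le i$ or $r'\ge i+1$, and case-split on~$i$. The paper streamlines your boundary-column bookkeeping by simply observing that the $j$th column of~$G$ is a scalar multiple of $(\alpha^{j},\ldots,\alpha^0,\ldots,\alpha^{n-j})$, which absorbs the folded tails into the constant and makes the ratio identity immediate for all columns uniformly.
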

\begin{proof}
Let the variables $\{g_{ij}\}$ denote the entries of $\rg$ and let
$w_i$ be the entry of in row $i$ of column $l$ of the induced
mechanism. By definition, $w_i =\sum_{j=a}^ b g_{ij}$.

The $j$th column of $\rg$ is a multiple of the
column vector $(\alpha^{j},\alpha^{j-1}, \ldots, \alpha^0, \ldots, \alpha^{n-j-1}, \alpha^{n-j})$. So, for $i,j$, $0 \leq i \leq a-1$, $a \leq j \leq b$, $g_{ij} = \alpha g_{(i+1)j}$, and so, $w_{i} = \alpha w_{i+1}$. Similarly, for $i,j$, $b \leq i \leq n$, $a \leq j \leq b$ $g_{ij} = (1/\alpha) g_{i+1j}$, and so, $w_{i} = (1/\alpha)w_{i+1}$. 

Finally, for every $i$, $a \leq i \leq b-1$, there exist $j,j'$,  $a \leq j,j' \leq b$,  such that       $g_{ij} = \alpha g_{i+1j}$  and $g_{ij'} = (1/\alpha)g_{i+1j'}$. Also $\alpha <1$. Therefore, for $i$ such that $a \leq i \leq b-1$ we have that $\alpha w_i < w_{i+1} < (1/\alpha)w_i$. We have the proof by the definition of constraint matrices.
\end{proof}

The following lemma shows that there exists a remapping of $G$ that
induces a mechanism with constraint matrix
identical to~$C$, the constraint matrix corresponding to
our fixed (vertex) optimal mechanism~$x$.
By Observation~\ref{fac:char}, the induced mechanism must be $x$ and
the theorem is proved.

\begin{lemma} 
\label{lem:geomup}
There exists a remapping of $G$ such that the induced mechanism has a
constraint matrix identical to $C$.
\end{lemma}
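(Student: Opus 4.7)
The plan is to exhibit an explicit deterministic remap $w'$ of $G$ and then verify, column by column, that Lemma~\ref{lem:remap} produces exactly the pattern of $\mone$/$\slack$/$\one$ entries that Lemma~\ref{lem:valid} forces on the columns of~$C$.

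Let $c_0 < c_1 < \cdots < c_{n-z}$ enumerate, in increasing order, the $n-z+1$ indices of non-$\na$ columns of $C$. With the convention $S_{-1}=0$, set $a_i = i + S_{i-1}$ and $b_i = i + S_i$ for $i \in \{0,\ldots,n-z\}$, so that $b_i - a_i = s_i$ and $a_{i+1} = b_i + 1$. Direct inspection gives $a_0 = 0$ and, invoking $S_{n-z} = s = z$ from Lemma~\ref{lem:exactly}, $b_{n-z} = n$; hence the closed integer intervals $[a_0,b_0], [a_1,b_1], \ldots, [a_{n-z},b_{n-z}]$ partition $\{0,1,\ldots,n\}$ exactly. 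Define the deterministic remap $w' : N \to N$ by sending every integer in $[a_i, b_i]$ to $c_i$.

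For each $i$, Lemma~\ref{lem:remap} applied to the interval $[a_i, b_i]$ (mapped to $c_i$, with no other preimages of $c_i$) states that column $c_i$ of the constraint matrix of $w' \compose G$ has $\mone$'s in rows $0,\ldots,a_i - 1$, $\slack$'s in rows $a_i,\ldots,b_i - 1$, and $\one$'s in rows $b_i,\ldots,n-1$. By the choice of $a_i, b_i$ this is precisely the pattern Lemma~\ref{lem:valid} prescribes for the $i$th non-$\na$ column of $C$. For any column index $l \notin \{c_0,\ldots,c_{n-z}\}$, no integer has image $l$ under $w'$, so the $l$th column of $w' \compose G$ is identically zero and its constraint-matrix column is all $\na$'s, again matching $C$. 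Concatenating these column-wise matches yields that the constraint matrices of $w' \compose G$ and $x$ coincide.

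The only substantive step is the bookkeeping showing that the intervals $[a_i, b_i]$ tile $\{0,\ldots,n\}$ exactly. This is where Lemma~\ref{lem:exactly} is essential (so that $b_{n-z}=n$ rather than falling short), while the offset $i$ in the formula $a_i = i + S_{i-1}$ captures the one-row shift that each preceding column contributes, because row $i$ of $C$ encodes constraints between rows $i$ and $i+1$ of the mechanism, so each non-$\na$ column absorbed by $w'$ pushes the $\mone$-to-$\slack$ transition of the next column one row lower. Once the tiling is in hand, the rest is a per-interval invocation of Lemma~\ref{lem:remap}.
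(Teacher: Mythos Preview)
Your proof is correct and is essentially the paper's own argument: you define the same deterministic remap sending the block $\{i+S_{i-1},\ldots,i+S_i\}$ to the $i$th non-$\na$ column index, invoke Lemma~\ref{lem:exactly} to see that these blocks partition $N$, and then match columns via Lemmas~\ref{lem:remap} and~\ref{lem:valid}. The only difference is cosmetic---you name the endpoints $a_i,b_i$ and spell out the tiling check more explicitly than the paper does.
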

\begin{proof}
Define the following (user-specific) deterministic map $\hat{y}$,
which we apply to $G$. Let $k_i$ be the index of the $i$th non-$\na$
column of $C$. Map the integers in $i+ S_{i-1} \ldots i+ S_{i}$ to
$k_i$. We check that $\hat{y}$ is well defined: $S_{i} +1$ responses
of $G$ are mapped to $k_i$, and so, a total of $n+1-z + s$ distinct
responses are mapped. By Lemma~\ref{lem:exactly}, $z=s$, so the map is
well defined for every member of the range $N$ of~$G$.
  
No integer that is an index of a $\na$-column of $C$ is in the range
of the map. So, the constraint matrix of the induced mechanism has the
same set of $\na$-columns as $C$. By Lemma~\ref{lem:valid}, the $i$th
non-$\na$ column has $\mone$'s in rows $0,\ldots, i-1 +S_{i-1}$,
$\slack$'s in rows $i +S_{i-1} ,\ldots, i-1+S_{i}$, and $\one$'s in
the remaining rows. By Lemma~\ref{lem:remap} and the definition of
$\hat{y}$, this is precisely the pattern of the $i$th non-$\na$ column
of the constraint matrix of the induced mechanism.
\end{proof}

\section{Discussion}

\subsection{Uniqueness}

Are there other mechanisms for which an analog of
Theorem~\ref{thm:main} holds? An obvious candidate is the well-studied
Laplace mechanism (release the result with noise added from a
distribution with density function $\epsilon/2 \cdot e^{-\epsilon
  |t|}$, where $\alpha = e^{-\epsilon}$), essentially a continuous
version of the geometric mechanism. Here is an example where the
Laplace mechanism compares unfavorably with the geometric mechanism.
Fix a count query, a database size $n$, and a user with loss function
$l_{err}$ and prior $(1/2,1/2)$ over the two possible results
$\{0,1\}$. The geometric mechanism has (optimal) expected loss
$\alpha/(1+\alpha)$ whereas the Laplace mechanism has expected loss
$\sqrt{\alpha}/2$. The approximation achieved by the Laplace mechanism
tends to $\infty$ as $\alpha$ approaches $0$. Though somewhat
pathological, this example rules out the Laplace mechanism as an answer
to the above question.

So, is the $\alpha$-geometric mechanism the unique mechanism for which
an analog of Theorem~\ref{thm:main} holds? No, because, 
the proofs from the previous section demonstrate that the
range-restricted variant~$G$ also satisfies the guarantee of the theorem.
But a uniqueness result is possible: if we restrict attention to
mechanisms with range~$N$, then~$G$ is the unique simultaneously
optimal mechanism, up to a permutation of the range.

\begin{theorem}\label{thm:unique}
Fix a database size $n\geq 1$ and a privacy level $\alpha \in
[0,1]$. Suppose there is an $\alpha$-differentially private mechanism
$x$ with range $N$ such that there exists remap $y^u$ for every user
$u$ where $y^u \compose x$ minimizes the expected loss of user $u$
(Eq~\ref{eq:utility}) over all oblivious, $\alpha$-differentially
private mechanisms with range $N$. Then, there exists permutation remap
$p$ such that $p \compose x = x^G$.
\end{theorem}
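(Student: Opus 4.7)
My plan is to identify a single user $u_0$ for which $G$ is the \emph{unique} minimum-loss range-$N$ mechanism, so that the hypothesis forces a deterministic remap of $x$ to equal $G$ exactly, and then to finish with a short pigeonhole argument on the columns. A convenient choice is $u_0$ with uniform prior on $N$ and binary loss $l_{bin}$. By Theorem~\ref{thm:main}, every range-$N$ optimum for $u_0$ has the form $\hat{y}^{u_0}\compose G$ for some deterministic remap; for $u_0$, the Bayes-optimal remap selects, for every response $r$, the value $\arg\max_i G_{ir}$. Because $G_{rr} > G_{ir}$ strictly for every $i\neq r$ (both in the interior and at the boundary, since $\alpha\in(0,1)$), the Bayes-optimal remap is uniquely the identity, and $G$ is the unique optimal range-$N$ mechanism for $u_0$.

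Now consider the Bayes-optimal deterministic remap of $x$ for $u_0$, which (by Section~\ref{sec:remap}) achieves the minimum expected loss among all remaps of $x$. The hypothesis guarantees that this minimum equals the global minimum, and by the uniqueness just established, the Bayes-optimal remap must induce $G$. Call this deterministic remap $\pi \colon N \to N$; then
\[
(\pi\compose x)_{ir} \;=\; \sum_{r'\in\pi^{-1}(r)} x_{ir'} \;=\; G_{ir}\qquad\text{for all }i,r\in N.
\]
Every entry of every column of $G$ is strictly positive, so the fiber $\pi^{-1}(r)$ is nonempty for each $r$; since $\pi$ is a self-map of the $(n+1)$-element set $N$ whose every fiber is nonempty, $\pi$ is a bijection with $|\pi^{-1}(r)| = 1$ for every $r$. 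The sum above collapses, yielding $x_{\cdot r'} = G_{\cdot \pi(r')}$ for every $r'\in N$. Viewing $\pi$ as a permutation remap $p$ on $N$, we obtain $p\compose x = G$, which (identifying $x^G$ with its range-restricted form $G$ from Section~\ref{sec:simul}) is the desired conclusion.

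The main subtlety I anticipate is handling ties in the Bayes-optimal choice: if $\arg\max_i x_{ir'}$ is not unique for some $r'$, different tie-breakings give different deterministic remaps, and each must still induce $G$. Comparing the two induced mechanisms column by column forces $x_{\cdot r'} = 0$, i.e., $r'$ would index a $\na$-column of $x$. But the pigeonhole argument above rules out $\na$-columns (a $\na$-column of $x$ would leave some column of $G$ with no summand, hence equal to zero), so ties cannot occur. Beyond this check, the only nontrivial ingredient is the uniqueness claim in the first paragraph, which is immediate from the strict peaking of every column of $G$.
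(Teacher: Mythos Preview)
Your approach is the same as the paper's: pick the user $u_0$ with uniform prior and $l_{bin}$, establish that $G$ is the \emph{unique} optimal mechanism for $u_0$, and then use a surjectivity/pigeonhole argument on the deterministic Bayes-optimal remap of $x$. The second half matches the paper exactly and is correct.

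There is, however, a gap in your uniqueness argument. Theorem~\ref{thm:main} is an existence statement---it asserts that $y^{u_0}\compose G$ is \emph{an} optimum, not that \emph{every} optimum is a remap of $G$---so your sentence ``By Theorem~\ref{thm:main}, every range-$N$ optimum for $u_0$ has the form $\hat{y}^{u_0}\compose G$'' does not follow. What you actually need is the stronger structural fact established in the \emph{proof} of Theorem~\ref{thm:main} (Section~\ref{sec:simul}): every optimal \emph{vertex} of the user-specific LP is a deterministic remap of~$G$. Combined with your (correct) observation that the identity is the unique loss-minimizing remap of~$G$ for $u_0$---because each column of $G$ strictly peaks on the diagonal---this yields exactly one optimal vertex, and hence by convexity a unique optimum. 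The paper sidesteps all of this by simply asserting that one can solve the user-specific LP for $u_0$ directly and read off~$G$ as the unique optimum; your route is more explicit but needs the extra citation and the one-line convexity step to close.

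Your tie-breaking paragraph is unnecessary: once uniqueness forces $\pi\compose x = G$, the fact that every column of $G$ is strictly positive already makes $\pi$ surjective (hence bijective) regardless of how ties in the Bayes rule were resolved.
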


\begin{proof} 
Consider a user with uniform prior over $N$ and the loss function
$l_{bin}$. We can solve the user-specific LP to show that $G$ is the
unique optimal mechanism for the user, where the user is constrained
to accept the mechanism's responses at face value.
Suppose there is a mechanism $G'$ distinct from $G$ that is
optimal for all users, and in particular for this user. As $G$ in the
above sense, $G'$ must induce $G$ on application of the user's optimal
remap. By assumption, the range of $G'$ is $N$, while $G$ is onto $N$.
Recall that optimal remaps are deterministic and so $G'$ must also be
onto $N$ \emph{and} the optimal remap must be a permutation.
\end{proof} 

\subsection{Obliviousness}
\label{sec:oblivious}

Recall that our main result (Theorem~\ref{thm:main}) compares the
utility of the remapped $\alpha$-geometric mechanism only to oblivious
mechanisms.  While natural mechanisms (such as the Laplace
mechanism from~\cite{BigBang}) are usually oblivious, we now justify
this restriction from first principles.  Suppose we deploy
non-oblivious mechanisms. 
Measuring the expected utility of such a mechanism
for a given user requires the user to have a prior over databases, and
not merely over query results.
If a user begins only with a prior over query results (arguably the
most natural assumption if that's all it cares about),
the prior could be extended to one over databases in numerous 
ways.  Singling out any one extension would be arbitrary,
so we consider optimizing worst-case utility over all such
extensions.  Formally, for Lemma~\ref{lem:oblivious} below, we
temporarily replace~\eqref{eq:utility} with the objective function
\begin{equation}\label{eq:obj1}
\max_{p_d \rightarrow p_i} 
\sum_{d \in D^n} p_d \sum_{r \in N} x_{dr} \cdot l(f(d),r).
\end{equation}
In~\eqref{eq:obj1},
$p_d \rightarrow p_i$ indicates that $p_d$ is a prior over $D^n$ that
induces the user's prior $p_i$ over $N$,
meaning $p_i = \sum_{d| f(d) =i} p_d$ for every $i \in N$.  The
following lemma then shows that the restriction to oblivious
mechanisms is without of generality.

\begin{proposition} \label{lem:oblivious}
Fix a database size $n \geq 1$ and privacy level $\alpha$. For every
user with prior $\{p_i\}$ and loss function $l$, there is an
$\alpha$-differentially private mechanism that 
minimizes the
objective function~\eqref{eq:obj1} and is also oblivious.
\end{proposition}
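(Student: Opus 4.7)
The plan is to show that for every (possibly non-oblivious) $\alpha$-differentially private mechanism $x$ with range~$N$, there is an oblivious $\alpha$-differentially private mechanism $\tilde{x}$ whose value of~\eqref{eq:obj1} is no larger than that of $x$. Since oblivious mechanisms form a subclass, this would force the two infima of~\eqref{eq:obj1} to coincide, and the infimum is attained because the set of oblivious $\alpha$-differentially private mechanisms with range $N$ is a polytope in finitely many variables on which~\eqref{eq:obj1} simplifies to the linear objective~\eqref{eq:utility}.

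First, I would rewrite~\eqref{eq:obj1} in a more tractable form. Writing $N_i = \{d \in D^n : f(d) = i\}$ and $L_d(x) = \sum_{r \in N} x_{dr}\cdot l(f(d),r)$, the only constraint on the adversary's distribution is $p_i = \sum_{d \in N_i} p_d$, which allows him to concentrate all of the mass $p_i$ on whichever single $d \in N_i$ maximizes $L_d(x)$. Hence~\eqref{eq:obj1} equals $\sum_{i \in N} p_i \max_{d \in N_i} L_d(x)$. For an oblivious mechanism, $L_d$ depends only on $f(d)$, so this max is trivial and~\eqref{eq:obj1} reduces to~\eqref{eq:utility}, as claimed above.

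Next I would build the comparison mechanism. Since the predicate underlying $f$ is non-trivial, fix $y \in D$ satisfying it and $z \in D$ not satisfying it, and let $d^{(i)} \in D^n$ be the database with $i$ rows equal to $y$ and the remaining $n-i$ rows equal to $z$, so that $f(d^{(i)}) = i$ and $d^{(i)},\, d^{(i+1)}$ are neighbors for every $i \in N \setminus \{n\}$. Define $\tilde{x}_{ir} = x_{d^{(i)}r}$, which is oblivious by construction. Applying $\alpha$-differential privacy of $x$ to the neighboring pair $(d^{(i)}, d^{(i+1)})$ immediately yields $\tilde{x}_{ir}/\tilde{x}_{(i+1)r} \in [\alpha, 1/\alpha]$, so $\tilde{x}$ is $\alpha$-differentially private. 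Its value of~\eqref{eq:obj1} is $\sum_i p_i L_{d^{(i)}}(x) \le \sum_i p_i \max_{d \in N_i} L_d(x)$, which is the value of~\eqref{eq:obj1} for $x$, completing the reduction.

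There is no real obstacle; the argument is essentially a comparison via a canonical neighbor chain, and the non-triviality of the predicate is exactly what guarantees such a chain spans all $n+1$ possible counts. A more symmetric alternative — averaging $x_{dr}$ over all $d \in N_i$ with, say, uniform weights — would also satisfy the objective bound by convexity, but verifying $\alpha$-privacy of such an average is awkward because two representatives in adjacent level sets typically differ in many rows and the ratio of two averages does not obviously lie in $[\alpha, 1/\alpha]$. Restricting to a fixed chain of pairwise neighbors is precisely what makes privacy inheritance immediate.
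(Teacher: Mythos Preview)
Your argument is correct but takes a genuinely different route from the paper. The paper defines the oblivious comparison mechanism by \emph{averaging}: set $x'_{dr}$ to be the uniform average of $x_{d'r}$ over all $d'$ with $f(d')=f(d)$. Privacy of $x'$ is then established via a bipartite regularity observation --- every database in the level set $E(d_1)$ has the same number of neighbors in $E(d_2)$, and vice versa --- so summing the per-pair inequalities $\alpha x_{dr}\le x_{d'r}$ over all neighboring pairs $(d,d')\in E(d_1)\times E(d_2)$ yields the privacy bound for the averages. Your construction instead restricts $x$ to a single canonical chain $d^{(0)},\dots,d^{(n)}$ of pairwise-neighboring databases, which makes privacy inheritance a one-line consequence and avoids any combinatorics. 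Your approach is shorter and more elementary; the paper's is more canonical in that it does not single out a particular chain. It is worth noting that the averaging construction you set aside as ``awkward'' is precisely what the paper does, and the regularity argument is exactly the missing ingredient that makes the ratio of the two averages land in $[\alpha,1/\alpha]$.
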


\begin{proof}
Fix a privacy level $\alpha$, a database size $n$ and a user with
prior $\{p_t\}$ and monotone loss function $l$.  Let $x$ be a
mechanism that minimizes~\eqref{eq:obj1} for this user. 
We define a mechanism $x'$ that is oblivious,
$\alpha$-differentially private, and has at most as much
expected loss as $x$ for the objective~\eqref{eq:obj1}. For any
database $d \in D^n$, define $E(d)$ as the set of databases with the
same query result as $d$. For a database $d \in D^n$ and response $r
\in N$, let $x'_{dr}$ be the average of $x_{d'r}$ over the databases
$\{d'|d' \in E(d)\}$; $x'$ is oblivious by definition and also a valid
mechanism, in that it specifies a distribution over responses for
every underlying database.

We now show that $x'$ is $\alpha$-differentially private. Fix two
databases $d_1, d_2 \in D^n$ such that $d_1$ and $d_2$ differ in
exactly one row;  We need to show that $\alpha x'_{d_1r} \leq
x'_{d_2r}$. Assume $f(d_1) \ne f(d_2)$, otherwise the proof is
trivial.

For any database of $E(d_1)$, we can generate all its neighbors
(databases that differ in exactly one row) in $E(d_2)$ by enumerating
all the ways in which we can change the query result by exactly
$1$. For instance when $f(d_1) = f(d_2) +1$, pick one of the
$n-f(d_1)$ rows that satisfy the predicate in $d_1$ and change its
value to one of those that violates the predicate. This process is
identical for all databases of $E(d_1)$, and so for all $d \in
E(d_1)$, the number of neighbors of $d$ that belong to the set
$E(d_2)$ is the same (does not vary with $d$). Similarly, for all $d
\in E(d_2)$, the number of neighbors of $d$ that belong to the set
$E(d_1)$ is the same.

Consider the following set of inequalities that hold because $x$ is
$\alpha$-differentially private: $d \in E(d_1)$, $d' \in E(d_2)$,
where $d_1$ and $d_2$ are neighbors, $ \alpha  x_{dr} \leq
x_{d'r}$. By the argument in the above paragraph, all the databases in
$E(d_1)$ appears equally frequently in the left-hand-side of the above
inequality and all the databases in $E(d_2)$ equally frequently in the
right-hand-side. Summing the inequalities and recalling the definition
of $x'$ completes the proof of privacy.

We now show that $x'$ has better worst case expected loss. Since $x'$
is oblivious, the expected loss is the same for all prior
distributions over databases that induces the prior $\{p_i\}$ over
results. By definition of $x'$, the expected loss is: $\sum_{i\in N}
p_i \, avg_{d| f(d)=i} \sum_{r \in N}  x_{dr} \cdot l(i,r)$. For the
mechanism $x$, we can construct an adversarial distribution over
databases that induces $\{p_i\}$, such that for every $i \in N$, the
weight $p_i$ is assigned entirely to a database $d$ that maximizes
$\sum_{r} x_{dr} \cdot l(f(d),r)$ over ${d | f(d) = i}$. Thus $x$
incurs at least as much expected loss as $x'$.
\end{proof}

An alternative is to model users as having a prior over databases. Though priors on databases induce priors on query results, the converse is not necessarily true. Formally, for Theorem~\ref{thm:imposs} below, we
use the following objective function. 

\begin{equation}\label{eq:obj2}
\sum_{d \in D^n} p_d \sum_{r \in N} x_{dr} \cdot l(f(d),r).
\end{equation}

We show that no analog of Theorem~\ref{thm:main} is possible in this model.\footnote{Proposition~\ref{lem:oblivious} implicitly shows that with the restriction to oblivious mechanisms, Theorem~\ref{thm:main} would continue to hold in this model. However, Theorem~\ref{thm:imposs} implies that obliviousness is no longer without loss of generality.} 

\begin{theorem} \label{thm:imposs}
There exists a database size, level of privacy and two users, each with monotone loss functions and distinct priors over databases such that no mechanism is simultaneously optimal for both of them.
\end{theorem}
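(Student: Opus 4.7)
The plan is to exhibit an explicit small counterexample in which the differential-privacy constraint between two neighboring databases makes simultaneous optimality impossible. Take database size $n=2$, domain $D=\{a,b\}$, and the count query ``number of rows equal to $a$'', with any privacy level $\alpha\in(0,1)$. Focus on the two databases $d_1=(a,b)$ and $d_2=(a,a)$, which are neighbors and have query results $1$ and $2$ respectively. Consider two users both equipped with the monotone loss function $l(i,r)=|i-r|$: User~1 with prior $p^1$ placing all mass on $d_1$, and User~2 with prior $p^2$ placing all mass on $d_2$. These priors over $D^n$ are clearly distinct.

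Under objective~\eqref{eq:obj2}, User~1's expected loss under a mechanism $x$ reduces to $\sum_{r\in N} x_{d_1,r}\,|1-r|$, which depends only on the row $x_{d_1,\cdot}$; User~2's loss is analogously $\sum_{r\in N} x_{d_2,r}\,|2-r|$, depending only on $x_{d_2,\cdot}$. Each user's expected loss is therefore minimized (to~$0$) iff the corresponding row is concentrated on the true query result: $x_{d_1,1}=1$ for User~1 and $x_{d_2,2}=1$ for User~2. Each optimum is realized in isolation by a feasible $\alpha$-differentially private mechanism, for example the constant mechanism that always outputs~$1$ (for User~1) or $2$ (for User~2); hence each user's individual optimum value is~$0$.

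The key step is then to show these two zero-loss optima are jointly unattainable. Since $d_1$ and $d_2$ are neighbors, $\alpha$-differential privacy requires $x_{d_1,r}/x_{d_2,r}\in[\alpha,1/\alpha]$ for every $r$, with the convention $0/0=1$. If $x_{d_1,1}=1$, then $x_{d_1,r}=0$ for all $r\neq 1$; since $\alpha>0$, the ratio constraint forces $x_{d_2,r}=0$ for every such $r$, and hence $x_{d_2,1}=1$. This gives User~2 an expected loss of $|2-1|=1>0$. By the symmetric argument, any mechanism giving User~2 loss~$0$ gives User~1 loss at least~$1$. Thus no single $\alpha$-differentially private mechanism simultaneously attains both users' optima, establishing Theorem~\ref{thm:imposs}. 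There is no real obstacle in this proof; the only point deserving care is the observation that $0/(\text{positive})$ genuinely violates the privacy constraint for $\alpha>0$, which is immediate from the definition of $\alpha$-differential privacy.
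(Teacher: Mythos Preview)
Your argument has a genuine gap: it misreads what ``simultaneously optimal'' means in this paper. Throughout (see Theorem~\ref{thm:main} and the sentence introducing Theorem~\ref{thm:imposs}, ``no analog of Theorem~\ref{thm:main} is possible''), a mechanism $x$ is simultaneously optimal if, for every user $u$, there is a \emph{remap} $y^u$ such that $y^u\circ x$ achieves $u$'s minimum expected loss over all $\alpha$-differentially private mechanisms. Your proof only shows that no single $x$ is optimal for both users \emph{at face value}; it never rules out remapping.

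And with your point-mass priors, remapping kills the example immediately. User~1's optimal loss is $0$, attained by the constant mechanism that always outputs~$1$; but \emph{any} mechanism $x$ can be remapped to this constant mechanism (set $y^1_{r,1}=1$ for all $r$), and the composition is trivially $\alpha$-differentially private. Likewise User~2 remaps everything to~$2$. Hence every mechanism is simultaneously optimal for your two users, and the proposed counterexample is vacuous.

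The paper's proof avoids this by choosing users whose priors have support on several databases with the \emph{same} query result, so that each user's unique optimal mechanism is genuinely non-oblivious (its rows for $\{1\}$ and $\{2\}$ differ even though $f(\{1\})=f(\{2\})$). Such mechanisms cannot be obtained from a constant mechanism by remapping. The argument then shows (Lemma~\ref{lem:fourcolumns}) that any common ``parent'' $x$ with remaps $y_1,y_2$ inducing the two optima would, after collapsing range points by their $(y_1,y_2)$-labels, have to satisfy a specific $4$-column template whose entries are then forced by the privacy inequalities into a contradiction. The essential ingredient you are missing is a pair of users whose optimal mechanisms are nontrivial enough that they are not both post-processings of a single private mechanism.
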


We now prove Theorem~\ref{thm:imposs}. Suppose that the domain $D$ is $\{0,1\}$ and we draw databases from $D^3$. Fix a count query that counts the number of rows that are $1$. Fix a  privacy level $\alpha=1/2$. We label databases by the set of rows that have values $1$; in this notation, the result of the query is the cardinality of the set. The first user's prior over databases is $\frac{1}{4}$ on $\{1\}$ and $\{2\}$, $\frac{1}{2} - \epsilon$ on $\{2,3\}$, $\epsilon$ on $\{1,3\}$ and $0$ on all others.  The second user's prior is defined by interchanging the roles of rows $1$ and $2$ in this definition. Both users have the monotone loss function $l(i,r) = |i-r|^{1-\delta}$. There exist small, positive $\epsilon$ and $\delta$, for which the unique (non-oblivious) optimal mechanism (where the user accepts results at face value) for the first user can be shown to be the mechanism $x_1$ defined in
Figure~\ref{fig:M1}. (Only the rows that referred to by our proof are specified.) Similarly, the unique  optimal mechanism for the second user can be derived by interchanging the roles of the rows $1$ and $2$.

\begin{figure} \label{fig:M1}
\begin{center}
  \begin{tabular}{| c | c | c |}
    \hline
    \mbox{Set} & 1 & 2 \\ \hline     
    \{\}   & &    \\ \hline
    \{ 1\} & 11/12 & 1/12   \\ \hline
    \{ 2\}& 2/3 & 1/3   \\ \hline
    \{ 3\} & &    \\ \hline
    \{ 1, 2\} & &  \\ \hline
    \{ 1, 3\}& 5/6 & 1/6 \\ \hline
    \{ 2, 3\}& 1/3 &2/3 \\ \hline
    \{ 1, 2, 3\} & & \\ \hline
  \end{tabular}
\end{center}
\caption{The mechanism $x_1$}
\end{figure}

We will show that there is no $1/2$-differentially private mechanism $x$ that implements $x_1$ and $x_2$
simultaneously. Assume to the contrary: Let $y_1$ and $y_2$ be deterministic 
maps that when applied to $x$ induce $x_1$ and $x_2$ respectively. (We skip the easy modification to the proof that allows $y_1$ and $y_2$ to be randomized.) We first show that $x$, $y_1$ and $y_2$ have the following form:

\begin{lemma}\label{lem:fourcolumns}
Without any loss of generality, $x$ has a range $R =
\{l,m,n,o\}$ of size $4$, $y_1$ maps $l$ and $n$ to $1$ and $m$ and $o$ to $2$ and $y_2$ maps $l$ and $o$ to $1$ and $m$ and $n$ to $2$.
\end{lemma}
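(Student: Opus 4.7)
The plan is to reduce the range of $x$ to the four equivalence classes determined by the pair of labels that $y_1$ and $y_2$ place on each response. First, since $x_1$ and $x_2$ each have support contained in $\{1,2\}$ (as indicated by Figure~\ref{fig:M1} and its reflection about rows $1$ and $2$), I can discard any response $r$ of $x$ with $y_1(r) \notin \{1,2\}$ or $y_2(r) \notin \{1,2\}$: such an $r$ must carry zero probability on every database, because the relevant column of $x_1$ (or $x_2$) is identically zero and is defined as a sum that includes $x_{dr}$. After this pruning, every surviving response $r$ carries a well-defined label $(y_1(r),y_2(r)) \in \{1,2\}^2$.

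Next, I would merge all responses sharing a common label into a single response, yielding a mechanism $\tilde{x}$ with range $\{1,2\}^2$ defined by
\[
\tilde{x}_{d,(a,b)} \;=\; \sum_{r \,:\, y_1(r)=a,\; y_2(r)=b} x_{dr}.
\]
Two checks suffice: (i) $\tilde{x}$ remains $\alpha$-differentially private, since if every response in a subset satisfies $\alpha \cdot x_{d_1 r} \leq x_{d_2 r}$, then summing preserves the inequality, and the symmetric bound is identical; (ii) letting $\tilde{y}_1$ and $\tilde{y}_2$ be projection onto the first and second coordinate respectively, one has $\tilde{y}_1 \compose \tilde{x} = y_1 \compose x = x_1$ and $\tilde{y}_2 \compose \tilde{x} = y_2 \compose x = x_2$. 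Thus $\tilde{x}$ simultaneously implements both $x_1$ and $x_2$ via deterministic remaps, just as $x$ did.

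I would then relabel the four coordinates as $l = (1,1)$, $m = (2,2)$, $n = (1,2)$, and $o = (2,1)$. Under this relabeling $\tilde{y}_1$ sends $l,n$ to $1$ and $m,o$ to $2$, while $\tilde{y}_2$ sends $l,o$ to $1$ and $m,n$ to $2$, exactly matching the lemma's statement. If some label happens not to appear in the support, I pad the range with a ghost response of probability $0$ on every database; the privacy constraint then reads $0/0$ on that response and is interpreted as $1$, so padding preserves both differential privacy and the two induced-mechanism identities while forcing $|R|=4$ exactly.

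The main obstacle is essentially only bookkeeping; the one subtlety the paper flags is the case where $y_1$ or $y_2$ is randomized rather than deterministic. In my plan this is absorbed by the same merging: any randomized remap decomposes as a convex combination of deterministic ones, so I can first expand $x$ by splitting each response into copies weighted by the map's randomness, reducing to the deterministic case and then collapsing labels as above. This is why the main text can afford to treat the randomized extension as an easy modification.
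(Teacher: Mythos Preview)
Your proposal is correct and follows essentially the same approach as the paper: label each response of $x$ by the pair $(y_1(r),y_2(r))\in\{1,2\}^2$ and merge responses with a common label into a single range point. The paper's proof is a two-sentence sketch of exactly this idea; you have simply filled in the bookkeeping (preservation of differential privacy under summation, padding of empty labels, and the reduction of randomized remaps to the deterministic case) that the paper leaves implicit.
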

\begin{proof}
Both $x_1$ and $x_2$ have range size $2$. Label each member of the range of $x$ in one of four ways based on where it is mapped by $y_1$ ($1$ or $2$) and $y_2$ ($1$ or $2$). All the range points with the same label may be combined into one range point. For instance $o$ consists of all the range points mapped to $2$ by $y_1$ and $1$ by $y_2$. The proof is complete.
\end{proof}

\begin{figure}
\begin{center}
  \begin{tabular}{| c | c | c | c| c|}
    \hline
    \mbox{Set} & l & m & n & o\\ \hline     
    \{\}   & & & &   \\ \hline
    \{ 1\} & $\frac{1}{12}(8-\delta_1)$ & $\frac{1}{12}(1-\delta_1)$ & $\frac{1}{12}(3+\delta_1)$ & $\frac{1}{12}\delta_1$\\ \hline
    \{ 2\} &$\frac{1}{12}(8-\delta_2)$ & $\frac{1}{12}(1-\delta_2)$  & $\frac{1}{12}\delta_2$ &  $\frac{1}{12}(3 + \delta_2)$  \\ \hline
    \{ 3\} & & & &    \\ \hline
    \{ 1, 2\} &  & & & \\ \hline
    \{ 1, 3\}& $\frac{1}{6}(2-\delta_3)$ & $\frac{1}{6}(1 -\delta_3)$ & $\frac{1}{6}(3+ \delta_3)$ & $\frac{1}{6}\delta_3$ \\ \hline
    \{ 2, 3\}& $\frac{1}{6}(2-\delta_4)$ & $\frac{1}{6}(1-\delta_4)$& $\frac{1}{6}\delta_4$ & $\frac{1}{6}(3+\delta_4)$\\ \hline
    \{ 1, 2, 3\} & & & & \\ \hline
  \end{tabular}
\end{center}
\label{f:simmech}
\caption{The mechanism $M$}
\end{figure}

We are now ready to prove Theorem~\ref{thm:imposs}.

\vspace{.1in}

\begin{prevproof}{Theorem}{thm:imposs}
By the previous lemma and the definitions of $x_1$ and $x_2$, the mechanism $x$ must have the form in Figure
\ref{f:simmech}. Because $x$ is $1/2$-differentially private and as $\{1\}$ and $\{2\}$ differ in exactly two rows, columns $n$ and $o$ yield the following inequalities: $3 + \delta_1 \leq 4 \delta_2$, and $3 + \delta_2 \leq 4 \delta_1$. Adding the two inequalities, we have: $6 \leq 3(\delta_1+ \delta_2)$. Because all the entries of $x$ are probabilities, we have $0 \leq  \delta_1, \delta_2 \leq 1$. So it must be that $\delta_1 = \delta_2 = 1$. By a similar argument applied to the databases $\{3$\} and $\{4\}$, we can show that $\delta_3 = \delta_4 = 1$. 

Thus, the probability masses on $l$ when the underlying databases are $\{1\}$ and $\{1,3\}$ are $7/12$ and $1/6$ respectively. But this violates privacy because, the two databases differ in exactly one row but the two probabilities are not within a factor $2$ of each other.
\end{prevproof}

\section{Future Directions}

We proposed a model of user utility, where users are param\-etrized by a
prior (modeling side information) and a loss function (modeling
preferences).  Theorem~\ref{thm:main} shows that for every
fixed count query, database size, and level of privacy, there is a
single simple mechanism that is simultaneously optimal for all
rational users. Are analogous results possible for other definitions
of privacy, such as the additive variant of differential privacy (see~\cite{Survey08})?
Is an analogous result possible for other types of queries or for
multiple queries at once? 
When users have priors over databases (Theorem~\ref{thm:imposs}), 
are any positive results (such as simultaneous {\em approximation})
achievable via a single mechanism?

\section{Acknowledgments}
We thank Preston McAfee, John C. Mitchell, Rajeev Motwani, David Pennock and the anonymous referees.

\end{document}